\documentclass[11pt,a4paper]{article}

\usepackage[utf8]{inputenc}

\usepackage{fullpage} 

\usepackage{graphicx}

\usepackage{tabularx}

\usepackage{amsthm}
\usepackage{amsmath}
\usepackage{amssymb}
\usepackage{enumerate}
\usepackage{amsfonts}
\usepackage{mathtools}
\usepackage[sort,numbers,sectionbib]{natbib}



\usepackage{makecell}

\usepackage{tikz}
\usetikzlibrary{arrows,decorations.pathmorphing,decorations.pathreplacing,backgrounds,positioning,fit,matrix,arrows.meta,automata}
\usetikzlibrary{shapes,calc}

\tikzset{%
  apply style/.code={%
    \tikzset{#1}%
  }
}

\newcommand{\appprooflink}[1]{{\hyperref[proof:#1]{\appsymb}}}


\usepackage{dsfont} 

\usepackage{algorithm}
\usepackage[noend]{algpseudocode}
\algnewcommand\algorithmicinput{\textbf{Input:}}
\algnewcommand\algorithmicoutput{\textbf{Output:}}
\algnewcommand\Input{\item[\algorithmicinput]}
\algnewcommand\Output{\item[\algorithmicoutput]}

\usepackage{etoolbox}
\usepackage{tikz}
\usetikzlibrary{tikzmark}
\usetikzlibrary{calc}

\usepackage[menucolor=orange!40!black,filecolor=magenta!40!black,urlcolor=blue!40!black,linkcolor=red!40!black,citecolor=green!40!black,colorlinks,pagebackref]{hyperref}
\usepackage[capitalize,nameinlink]{cleveref}

\errorcontextlines\maxdimen

\newcommand{\ALGtikzmarkcolor}{black}
\newcommand{\ALGtikzmarkextraindent}{4pt}
\newcommand{\ALGtikzmarkverticaloffsetstart}{-.5ex}
\newcommand{\ALGtikzmarkverticaloffsetend}{-.5ex}
\makeatletter
\newcounter{ALG@tikzmark@tempcnta}

\newcommand\ALG@tikzmark@start{%
	\global\let\ALG@tikzmark@last\ALG@tikzmark@starttext%
	\expandafter\edef\csname ALG@tikzmark@\theALG@nested\endcsname{\theALG@tikzmark@tempcnta}%
	\tikzmark{ALG@tikzmark@start@\csname ALG@tikzmark@\theALG@nested\endcsname}%
	\addtocounter{ALG@tikzmark@tempcnta}{1}%
}

\def\ALG@tikzmark@starttext{start}
\newcommand\ALG@tikzmark@end{%
	\ifx\ALG@tikzmark@last\ALG@tikzmark@starttext
	\else
	\tikzmark{ALG@tikzmark@end@\csname ALG@tikzmark@\theALG@nested\endcsname}%
	\tikz[overlay,remember picture] \draw[\ALGtikzmarkcolor] let \p{S}=($(pic cs:ALG@tikzmark@start@\csname ALG@tikzmark@\theALG@nested\endcsname)+(\ALGtikzmarkextraindent,\ALGtikzmarkverticaloffsetstart)$), \p{E}=($(pic cs:ALG@tikzmark@end@\csname ALG@tikzmark@\theALG@nested\endcsname)+(\ALGtikzmarkextraindent,\ALGtikzmarkverticaloffsetend)$) in (\x{S},\y{S})--(\x{S},\y{E});%
	\fi
	\gdef\ALG@tikzmark@last{end}%
}

\apptocmd{\ALG@beginblock}{\ALG@tikzmark@start}{}{\errmessage{failed to patch}}
\pretocmd{\ALG@endblock}{\ALG@tikzmark@end}{}{\errmessage{failed to patch}}
\makeatother


\newtheorem{theorem}{Theorem}[section]
\newtheorem{lemma}[theorem]{Lemma}

\newtheorem{proposition}[theorem]{Proposition}

\theoremstyle{definition}


\crefname{observation}{Observation}{Observations}
\Crefname{observation}{Observation}{Observations}

\newcommand{\problemn}[1]{\textsc{#1}}
\newcommand{\problemdef}[3]{
	\begin{center}
	\begin{minipage}{0.95\textwidth}
		\noindent
		\problemn{#1}
		\vspace{5pt}\\
		\setlength{\tabcolsep}{3pt}
		\begin{tabularx}{\textwidth}{@{}lX@{}}
			\textbf{Input:}     & #2 \\
			\textbf{Question:}  & #3
		\end{tabularx}
	\end{minipage}
	\end{center}
}

\DeclarePairedDelimiterX{\abs}[1]{\lvert}{\rvert}{#1}

\newcommand{\mvert}{\;\middle|\;}

\newcommand{\NN}{\mathbb{N}} 

\newcommand{\bigO}{\mathcal{O}} 


\newcommand{\TS}[1][$s,z$]{\textsc{Temporal (#1)-Separation}}

\newcommand{\fpt}{fixed-parameter tractable}

\newcommand{\tG}{\mathcal{G}} 
\newcommand{\tE}{\mathcal{E}} 
\newcommand{\isos}{I} 
\newcommand{\doms}{C} 
\newcommand{\var}{\Omega}

\newcommand{\Ss}{\mathcal{S}}

\newcommand{\true}{\texttt{true}}
\newcommand{\false}{\texttt{false}}


\usepackage{authblk}

\title{On Finding Separators in Temporal Split and Permutation~Graphs%
\thanks{
H.~Molter was supported by the DFG,
project MATE (NI 369/17), and by the ISF, grant No.~1070/20. Main part of this work was done while H.~Molter was affiliated with TU~Berlin.
M.~Renken was supported by the DFG,
project MATE (NI 369/17).}}

\author[1]{Nicolas Maack}
\author[2]{Hendrik Molter}
\author[1]{Rolf Niedermeier}
\author[1]{Malte Renken}

\date{ }

\affil[1]{TU Berlin, Algorithmics and Computational Complexity, Berlin, Germany,  
\texttt{nicolas.km.maack@campus.tu-berlin.de, \{rolf.niedermeier,~m.renken\}@tu-berlin.de}}

\affil[2]{Department of Industrial Engineering and Management, Ben-Gurion~University~of~the~Negev, 
Beer-Sheva, 
Israel, 
\texttt{molterh@post.bgu.ac.il}}

\begin{document}

\maketitle

\begin{abstract}
Removing all connections between two vertices $s$ and $z$ in a graph by removing a minimum number of vertices is a fundamental problem in algorithmic graph theory. This $(s,z)$-separation problem is well-known to be polynomial solvable and serves as an important primitive in many applications related to network connectivity.

We study the NP-hard \emph{temporal} $(s,z)$-separation problem on temporal graphs, which are 
graphs with fixed vertex sets but edge sets that change over discrete time
steps. We tackle this problem by restricting the layers (i.e., graphs characterized by edges that are present at a certain point in time) to specific graph classes.

We restrict the layers of the temporal graphs to be either all 
split graphs or all permutation graphs (both being perfect graph classes) and  provide both intractability and tractability results.
In particular, we show that in general
the problem remains NP-hard both on temporal split and temporal 
permutation graphs, but we also spot promising islands of fixed-parameter 
tractability particularly based on parameterizations that measure the amount of ``change over time''.

\smallskip

\noindent\emph{Keywords:} Temporal graphs, Connectivity problems, Special graph classes, NP-hardness, Fixed-parameter tractability.
\end{abstract}

\section{Introduction}
Finding a smallest set of vertices whose deletion disconnects two 
designated vertices---the separation problem---is a fundamental problem in algorithmic graph 
theory. The problem, which is 
a backbone of numerous applications related to network connectivity,
is well-known to be polynomial-time solvable
in (static) graphs. Driven by the need of understanding and mastering
dynamically changing network structures, in recent years the 
study of temporal graphs---graphs with a fixed vertex set but edge
sets that may change over discrete time steps---has enjoyed an 
enormous growth. One  of the earliest systematic studies on 
temporal graphs dealt with the separation problem~\cite{KKK02}, where
it turned out to be NP-hard. 
This motivates the study of parameterized complexity aspects as well 
as of the complexity behavior on special temporal 
graph classes~\cite{Flu20,Zsc20}. 
Continuing and extending this line of research, we provide 
a first in-depth study on temporal versions 
of split and permutation graphs, two classes of perfect graphs
on which many generally NP-hard problems become  polynomial-time
solvable~\cite{BLS99,Gol04}. 
We present both intractability as well as (fixed-parameter) tractability 
results.

Formally, a \emph{temporal graph} is an ordered triple~$\tG = (V,\tE,\tau)$, where~$V$ denotes the set of vertices,
$\tE \subseteq \binom{V}{2} \times \{1,2,...,\tau\}$ the set of time-edges where $(\{v,w\},t) \in \tE$ represents an edge between vertices~$v$
and~$w$ available at time~$t$, and~$\tau \in \NN$ is the maximum time label. We can think of it as a series of $\tau$~static
graphs, called \emph{layers}. 
The graph containing the union of the edges of all layers is called 
the \emph{underlying graph} of~$\tG$.

Recently, connectivity and path-related problems 
have been extensively studied on temporal
graphs~\cite{ComputeTemporalPaths,Zsc20,Flu20,MMS19,BentertHNN20,BMNR20,Erlebach0K15,enright2021assigning,EnrightMMZ19}.
In the temporal setting, paths, walks, and reachability are 
defined in a time-respecting way~\cite{KKK02}: 
		A 
		\emph{temporal ($s,z$)-walk} (or \emph{temporal walk})  
		of length~$k$ from vertex $s=v_0$ to vertex $z=v_k$ in a temporal graph~$\tG$ is a sequence
$P = \left(\left(v_{i-1},v_i,t_i\right)\right)_{i=1}^k$
such that $(\{v_{i-1}, v_i\}, t_i)$ is a time-edge of~$\tG$ for all $i \in \{1,2,\dots,k\}$
and $t_i \leq t_{i+1}$ for all $i\in \{1,2,\ldots,k-1\}$.%
\footnote{Such walks are also called ``non-strict'', whereas ``strict'' walks require~$t_i < t_{i+1}$. We focus on non-strict walks in this work.}
A temporal walk is a temporal \emph{path} if it
visits every vertex at most once.

We study the \TS{} problem. 
Here, a \emph{temporal ($s,z$)-separator} is a set of vertices (not containing $s$ and $z$) whose removal destroys all temporal paths from~$s$ to~$z$.

\problemdef{Temporal ($s,z$)-Separation}{
	A temporal graph~$\tG = (V,\tE,\tau)$, two distinct vertices~$s,z \in V$, and~$k \in\NN$.
}{
	Does~$\tG$ admit a temporal ($s,z$)-separator of size at most~$k$?
}\label{pr:ts}

\TS{} is 
NP-complete~\cite{KKK02} and W[1]-hard when parameterized by the separator size
$k$~\cite{Zsc20}.
On the positive side, one can verify a solution in $\bigO(\abs{\tG})$ time (see e.g.\ \citet{ComputeTemporalPaths}).
\citet{Zsc20} investigated the differences between the
computational complexity of finding temporal separators that remove (non-strict)
temporal paths vs.\ \emph{strict} temporal paths. 
\citet{Flu20} studied the impact
of restrictions on the layers or the underlying graph on the computational complexity of \TS{} and found that even under severe
restrictions \TS{} remains NP-complete.
In particular, the problems stays NP-complete and W[1]-hard when
parameterized by the separator size even if every layer contains only one
edge and for several restrictions of the underlying graph~\cite{Flu20}.
They further investigated the case where every layer of the temporal
graph is a unit interval graph and obtained fixed-parameter tractability for
\TS{} when parameterized by the lifetime~$\tau$ and the so-called ``shuffle
number'', a parameter that measures how much the relative order of the intervals changes over time. This result initiated research on the amount of ``change over time'' measured by a parameter that is tailored to the graph class into which all layers fall.

In our work, we follow this paradigm of restricting each layer to a certain graph class
and measuring the amount of change over time with parameters that are tailored to the graph class of the layers. 
More specifically, we investigate the complexity 
of \TS{} on temporal graphs where
every layer is a \emph{split graph} or every layer is a 
\emph{permutation graph}. 

\smallskip

\noindent\textbf{Temporal split graphs.}
In a split graph, the vertex set can be partitioned
into a clique and an independent set. 
Split graphs can be used to model an idealized form of core-periphery structures, in which there exists a densely connected core and a periphery that only has connections
to that core \cite{BE00}. 
Core-periphery structures can be observed in social contact networks in which one group of people meets
at some location, forming a fully connected core. Meanwhile, other people associated with the group may
have contact with some of its members, but otherwise do not have any interactions relevant to the observed network.
Such applications are naturally subject to change over time, for example due to vertices entering or leaving the network.%
\footnote{While the vertex set of a temporal graph formally remains unchanged, isolated vertices are equivalent to non-existing vertices as far as separators are concerned.}
This temporal aspect is captured by \emph{temporal split graphs},
where each layer has a separate core-periphery split.

We prove that \TS{} remains NP-complete on temporal split graphs, even with only four layers.
On the positive side, we show that \TS{} is solvable in polynomial time on
temporal graphs where 
the partition of the vertices into a clique and an
independent set stays the same in every layer. We use this as a basis for a
``distance-to-triviality''-parameterization~\cite{GHN04,Nie06}, also motivated by the assumption that in application cases the core-periphery structure of a network will roughly stay intact over time with only few changes.
Intuitively, we parameterize on how
many vertices may switch between the two parts over time 
and use this parameterization to obtain
fixed-parameter tractability results. Formally, we show
fixed-parameter tractability for the combined parameter ``number of vertices
different from $s$ and~$z$ that switch between the 
clique and the independent set at
some point in time'' and the lifetime~$\tau$.

\smallskip

\noindent\textbf{Temporal permutation graphs.}
A permutation graph on an ordered set of vertices is
defined by a permutation of that ordering. Two
vertices are connected by an edge if their relative order is inverted by the permutation.
They were introduced by \citet{EvenPermutation72}
and appear in integrated circuit design \cite{SenDG92},
memory layout optimization \cite{EvenPermutation72},
and other applications \cite{Gol04}.
In a \emph{temporal permutation graph}, the edges of each layer are given by a separate permutation.
We prove that \TS{} remains
NP-complete on temporal permutation graphs.
We then parameterize on how much the permutation 
changes over time to obtain fixed-parameter
tractability results.
We use the \emph{Kendall tau} distance~\cite{Ken38} to measure
the dissimilarity of the permutations. The Kendall tau distance is a metric
that counts the number of pairwise disagreements between two total orderings; it is also known as ``bubble sort distance''. 
More precisely, we obtain
fixed-parameter tractability for the combined parameter ``sum of Kendall tau distances between consecutive permutations'' and the separator size~$k$. We remark that in a similar context, the Kendall tau distance has also been used by \citet{Flu20} to measure the amount of change over time in a temporal graph.


\section{Split Graphs}\label{chap:split}

In this section, we study the computational complexity of finding temporal
separators in \emph{temporal split graphs}.
Split graphs represent an idealized model of core-periphery structures with a well-connected core and a periphery only connected to that core \cite{BE00}.
They also constitute the majority of all chordal graphs~\cite{bender1985}.

Formally, a graph~$G=(V,E)$ is called a \emph{split graph} if $V$~can be partitioned into two sets~$\doms, \isos$
such that~$\doms$ induces a clique and $\isos$ induces an independent set.
Then, $(\doms, \isos)$~is called a \emph{split partition} of~$G$.
In general, a split graph may admit multiple split partitions.
A \emph{temporal split graph} is a temporal graph~$\tG$ of which every layer is a split graph.
A \emph{temporal split partition} $(\doms_t, \isos_t)_{t=1}^\tau$ then contains a split partition of every layer of $\tG$.

\subsection{Hardness Results}\label{chap:split:hardness}

The fact that \TS{} on temporal split graphs is NP-hard can be derived from a result of \citet{Flu20}
stating that \TS{} is hard on temporal graphs containing a single edge per layer.
This is due to the fact that a graph with a single edge is clearly a split graph.

We now strengthen this result, showing that \TS{} is NP-hard on temporal split
graphs with only a constant number of layers. We do this by building on 
a reduction
by \citet{Zsc20} showing NP-hardness of \TS{} on general temporal graphs.

Here and in the following we use ``separator'' as a shorthand for ``temporal $(s,z)$-separator'' when no ambiguity arises.

\begin{theorem}
\label{thm:ts_tau_np}
\TS{} is NP-hard on temporal split graphs with four layers.
\end{theorem}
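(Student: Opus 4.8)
The plan is to adapt the NP-hardness reduction of Zschoche~\cite{Zsc20} (which shows hardness on general temporal graphs) so that only four layers are used and every layer is a split graph. The starting point would be to recall what Zschoche's reduction produces: presumably a reduction from a variant of \textsc{Vertex Cover} or \textsc{3-SAT} or directly from \problemn{Multicut}-like structures, in which a bounded number of ``selection'' layers encode a choice (variable assignment or set selection) and a few ``verification'' layers force the chosen separator to be consistent. The key structural observation to exploit is that a split graph is a very permissive class when one is allowed to use a large clique: a single large clique $\doms$ plus an independent set $\isos$ attached arbitrarily to $\doms$ is already quite expressive, and crucially, \emph{any} bipartite-like gadget can be realized in one layer by placing one side in $\doms$ (making them pairwise adjacent, which is harmless if those vertices cannot be part of any short temporal path segment at that time) and the other side in $\isos$.

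First I would fix the source problem and write down the temporal graph $\tG$ that Zschoche's reduction outputs, then count how many distinct ``types'' of layer it uses; if it already uses $O(1)$ layers the task reduces to a layer-by-layer surgery, and if it uses $\tau$ layers depending on the instance size, the first real step is to compress them into four by exploiting that non-strict temporal paths only care about the \emph{relative order} of edge availability, so many layers can be merged into one without changing reachability as long as no path is created or destroyed. Second, for each of the (now four) layers $G_t$, I would exhibit a split partition $(\doms_t, \isos_t)$: the natural candidate is to take $\doms_t$ to be a set of ``hub'' vertices through which all traffic in layer $t$ is routed, and $\isos_t$ the remaining vertices; one then has to check that adding the clique edges on $\doms_t$ (which a split graph forces) does not introduce new temporal $(s,z)$-paths. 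If it does, the fix is to duplicate or subdivide: replace a problematic edge by a path through a fresh degree-limited vertex placed in $\isos_t$, or split a layer's role across two of the four layers so that the hub vertices are never simultaneously ``active'' on both sides of a path.

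The third step is to argue the reduction is correct: a temporal $(s,z)$-separator of size $\le k$ in $\tG$ corresponds to a solution of the source instance and vice versa. This is largely inherited from Zschoche's correctness argument, but one must re-verify the backward direction carefully because the added clique edges give the adversary (the path-finder) more power, so the ``no small separator $\Rightarrow$ no solution'' direction is where the clique edges could break things. Finally, I would double-check the layer count is exactly four and not three or five — likely the split is: one or two layers to let $s$ reach the ``selection gadget'', one or two layers for the verification, and the bound four comes from needing both an ``even'' and an ``odd'' phase on each side.

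\textbf{The main obstacle} I anticipate is controlling the clique edges: making each layer a split graph is free in the sense that we may \emph{add} edges to turn any graph into a split graph (take any vertex cover as the clique side), but added edges can only create new temporal paths, never destroy them, so the danger is entirely on the soundness side. The crux will be choosing the split partitions so that the clique we are forced to add in each layer consists only of vertices that are ``temporally inert'' at that time — e.g., vertices that at time $t$ are either not yet reachable from $s$ or from which $z$ is not reachable in the remaining timeline — so that the extra clique edges are never usable in any temporal $(s,z)$-path. Getting a partition with this property simultaneously for all four layers, while keeping the separator-size threshold $k$ unchanged, is the delicate part; if a clean global choice is impossible, the fallback is the standard trick of subdividing every ``bad'' edge with a private vertex that is placed on the independent-set side, at the cost of only a constant blow-up in $\tau$ (which we can absorb back into four by the merging argument) and no change to the optimal separator size since subdivision vertices are never worth deleting.
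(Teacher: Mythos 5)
Your overall strategy --- take the \textsc{Vertex Cover} reduction of \citet{Zsc20}, redistribute its two layers over a constant number of layers, and add clique edges to make every layer a split graph while arguing that the set of temporal $(s,z)$-separators is unchanged --- is exactly the route the paper takes, and you correctly locate the danger on the soundness side (added edges can only create temporal paths). But the proposal stops where the actual work happens, and the two concrete mechanisms you offer for taming the clique edges are not the ones that make the argument go through. The paper's construction puts a clique on the vertices $s_v$ in layer~1 and on the vertices $z_v$ in layer~4; these vertices are \emph{not} ``temporally inert'' (a walk such as $s,s_v,s_w,z_u,z$ genuinely uses a new clique edge), so your inertness criterion is unattainable for the natural construction. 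What saves soundness is instead a shortcut argument: every clique member is adjacent to $s$ in layer~1 (respectively to $z$ in layer~4), so any temporal $(s,z)$-path using clique edges can be shortened to one already present without the clique edges, hence the separators coincide. Moreover, the reason the count is four is that the $v$--$z$ edges and the $s$--$v$ edges must be pulled out into their own star layers (layers~2 and~3, with clique $\{z\}$ resp.\ $\{s\}$), since otherwise the intended independent sets of layers~1 and~4 are not independent; one must then verify that relabeling the original labels $1,2$ as $1,2,3,4$ preserves the temporal $(s,z)$-paths, which works because every label-2 edge touches $z$ and every label-3 edge touches $s$. None of this construction or verification appears in your sketch.

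Your fallback plans also contain genuine flaws. Subdividing a ``bad'' edge with a private vertex placed on the independent side is not separator-neutral: the subdivision vertex is a new deletable vertex, so subdivision can only make separation cheaper, and one would have to argue (it is not automatic) that this never helps --- in the extreme, subdividing an $s$--$z$ edge turns an instance with no separator at all into one with a size-one separator. Likewise, the claim that extra layers can be ``absorbed back into four by the merging argument'' is unsound: merging layers makes previously non-concurrent edges concurrent and can therefore create new temporal $(s,z)$-paths, which is precisely what must be avoided; indeed the paper has to \emph{expand} two layers into four rather than compress. So you identify the right target and the right obstacle, but the construction and the correctness argument that constitute the proof are missing.
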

\begin{proof}
We prove this by building on the proof by \citet{Zsc20} showing NP-hardness of \TS{} on general temporal graphs.
Their proof employs a polynomial-time reduction from \textsc{Vertex Cover}. We extend this reduction so that it produces a \TS{} instance
on a temporal split graph with~$\tau=4$.

Let~$\mathcal{I}:=(G=(V,E),k)$ be a \textsc{Vertex Cover} instance and~$n:=\abs{V}$.
Construct a \TS{} instance~$\mathcal{I}':=(\tG'=(V',\tE',2),s,z,n+k)$, where~$V'=V\cup\{s_v,z_v\mid v\in V\}\cup\{s,z\}$
are the vertices and the time-edges are defined as
\begin{align*}
\tE':={}&\{(\{s,s_v\},1),(\{s_v,v\},1),(\{v,z_v\},2),(\{z_v,z\},2),(\{s,v\},2),(\{v,z\},1)\mid v\in V\}\cup\\
&\{(\{s_v,z_w\},1),(\{s_w,z_v\},1)\mid\{v,w\}\in E\}.
\end{align*}
Note that \citet{Zsc20} showed that deciding $\mathcal{I}'$ is NP-complete.

The idea is now to divide each of the layers of $\tG'$ into two layers in a way that turns all layers into split graphs
while not changing the set of temporal $(s,z)$-separators.
More specifically, we construct another \TS{} instance~$\mathcal{I}'':=(\tG''=(V',\tE'',4),s,z,n+k)$, where the time-edges are defined as
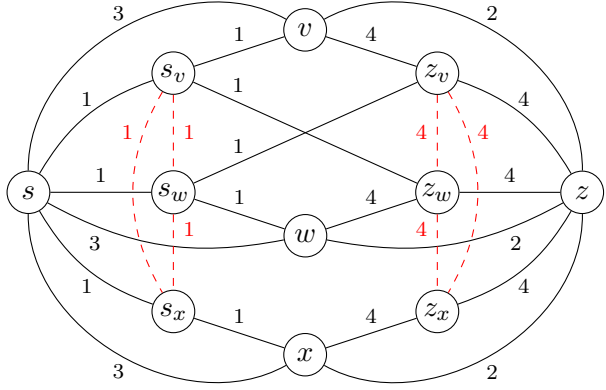
\begin{figure}[t]
	\centering
	\tikzstyle{alter}=[circle, minimum size=16pt, draw, inner sep=1pt] 
	\tikzstyle{clique}=[red, dashed]
	\begin{tikzpicture}
		\node[alter] at (0,0) (s) {$s$};
		\node[alter, right = 8ex of s] (sw) {$s_w$};
		\node[alter, below = of sw] (sx) {$s_x$};
		\node[alter, above = of sw] (sv) {$s_v$};
		\node[alter, below right = 1ex and 8ex of sw] (w) {$w$};
		\node[alter, below right = 1ex and 8ex of sx] (x) {$x$};
		\node[alter, above right = 1ex and 8ex of sv] (v) {$v$};
		\node[alter, above right = 1ex and 8ex of w] (zw) {$z_w$};
		\node[alter, below = of zw] (zx) {$z_x$};
		\node[alter, above = of zw] (zv) {$z_v$};
		\node[alter, right = 8ex of zw] (z) {$z$};
		
		\draw (s) edge[bend left=20] node[midway, anchor=south] {$\scriptstyle 1$} (sv);
		\draw (s) edge node[midway, anchor=south] {$\scriptstyle 1$} (sw);
		\draw (s) edge[bend right=20] node[midway, anchor=north] {$\scriptstyle 1$} (sx);
		\draw (s) edge[bend left=60] node[midway, anchor=south] {$\scriptstyle 3$} (v);
		\draw (s) edge[bend right=20] node[pos=0.2, anchor=north] {$\scriptstyle 3$} (w);
		\draw (s) edge[bend right=60] node[midway, anchor=north] {$\scriptstyle 3$} (x);
		\draw (sv) edge node[midway, anchor=south] {$\scriptstyle 1$} (v);
		\draw (sv) edge node[pos=0.2, anchor=south] {$\scriptstyle 1$} (zw);
		\draw (sw) edge node[midway, anchor=south] {$\scriptstyle 1$} (w);
		\draw (sw) edge node[pos=0.2, anchor=south] {$\scriptstyle 1$} (zv);
		\draw (sx) edge node[midway, anchor=south] {$\scriptstyle 1$} (x);
		
		\draw (z) edge[bend right=20] node[midway, anchor=south] {$\scriptstyle 4$} (zv);
		\draw (z) edge node[midway, anchor=south] {$\scriptstyle 4$} (zw);
		\draw (z) edge[bend left=20] node[midway, anchor=north] {$\scriptstyle 4$} (zx);
		\draw (z) edge[bend right=60] node[midway, anchor=south] {$\scriptstyle 2$} (v);
		\draw (z) edge[bend left=20] node[pos=0.2, anchor=north] {$\scriptstyle 2$} (w);
		\draw (z) edge[bend left=60] node[midway, anchor=north] {$\scriptstyle 2$} (x);
		\draw (zv) edge node[midway, anchor=south] {$\scriptstyle 4$} (v);
		\draw (zw) edge node[midway, anchor=south] {$\scriptstyle 4$} (w);
		\draw (zx) edge node[midway, anchor=south] {$\scriptstyle 4$} (x);
		
		\draw[clique] (sv) edge node[midway, anchor=west] {$\scriptstyle 1$} (sw);
		\draw[clique] (sw) edge node[pos=0.2, anchor=west] {$\scriptstyle 1$} (sx);
		\draw[clique] (sv) edge[bend right=30] node[pos=0.2, anchor=east] {$\scriptstyle 1$} (sx);
		
		\draw[clique] (zv) edge node[midway, anchor=east] {$\scriptstyle 4$} (zw);
		\draw[clique] (zw) edge node[pos=0.2, anchor=east] {$\scriptstyle 4$} (zx);
		\draw[clique] (zv) edge[bend left=30] node[pos=0.2, anchor=west] {$\scriptstyle 4$} (zx);
	\end{tikzpicture}
	\caption{The graph~$\tG''$ for~$G=(\{v,w,x\},\{\{v,w\}\})$ in \cref{thm:ts_tau_np}.
	The edges from~$\tE_\alpha$ are in solid black, and the edges from~$\tE_C$ are in red and dashed.}
	\label{fig:ts_tau_np}
\end{figure}
\begin{align*}
\tE'':={}&\tE_\alpha\cup\tE_C, \text{where}\\
\tE_\alpha:={}&\{(\{s,s_v\},1),(\{s_v,v\},1),(\{v,z_v\},4),(\{z_v,z\},4),(\{s,v\},3),(\{v,z\},2)\mid v\in V\}\cup\\
&\{(\{s_v,z_w\},1),(\{s_w,z_v\},1)\mid\{v,w\}\in E\},\\
\tE_C:={}&\{(\{s_v,s_w\},1),(\{z_v,z_w\},4)\mid v,w\in V,v\neq w\}.
\end{align*}
The time-edges in~$\tE_\alpha$ are the same edges as in~$\tE'$, except that they are distributed over four layers.
The time-edges in~$\tE_C$ induce cliques in layers~1 and~4, turning them into split graphs.
The constructed temporal graph is visualized in \cref{fig:ts_tau_np}.

Note that~$\abs{\tE''}=\abs{\tE'}+2{n \choose 2}$ and~$\mathcal{I}''$ can be computed in polynomial time.
All four layers of~$\tG''$ are split graphs: the time-edges in~$\tE_C$ exactly form a clique each in layers~1 and~4;
in layer~2, $\{z\}$ is the clique, and in layer~3, $\{s\}$ is the clique.
The time-edges in~$\tE_\alpha$ all have a vertex from their layer's clique as one endpoint, thus not connecting
any two vertices outside of it.

\paragraph*{Correctness.}
We prove that~$\mathcal{I}'$ is a yes-instance if and only if~$\mathcal{I}''$ is a yes-instance
by showing that a vertex set is a ($s,z$)-separator of~$\tG'$ if and only if it is also
a ($s,z$)-separator of~$\tG''$.

First, we note that the temporal ($s,z$)-paths in~$\tE'$ and~$\tE_\alpha$ are identical except for their time labels:
every time-edge in layer~1 of~$\tG'$ has a corresponding time-edge from~$\tE_\alpha$ between the same vertices in layer~1 or~2 of~$\tG''$,
and in layer~2 of~$\tG'$, each has a corresponding time-edge from~$\tE_\alpha$ in layer~3 or~4 of~$\tG''$.
All time-edges in~$\tE_\alpha$ correspond to one in~$\tE'$ in this way.
This means that every temporal path in~$\tE_\alpha$ has a corresponding path in~$\tE'$
that also has ascending time labels. Conversely, a temporal path in~$\tE'$
will have a corresponding path in~$\tE_\alpha$ that first only passes through time-edges with labels~1 or~2, and then through
time-edges with labels~3 or~4. Also, all time-edges with label~2 in~$\tE_\alpha$ are connected to~$z$,
and all time-edges with label~3 in~$\tE_\alpha$ are connected to~$s$. So an ($s,z$)-path cannot pass any edges after
an edge in layer~2 or before an edge in layer~3, making impossible any ($s,z$)-path with an edge from layer~2 before one from layer~1,
or an edge from layer~4 before one from layer~3. Hence, all paths in~$\tE_\alpha$ corresponding to a temporal ($s,z$)-path
in~$\tE'$ are also valid temporal ($s,z$)-paths.

$\Rightarrow$: We show that the addition of~$\tE_C$ does not result in any new temporal ($s,z$)-paths that can not be shortened
to one already existing in~$\tE_\alpha$, and by extension, in~$\tE'$. In layer~1, there is a time-edge from~$s$ to every vertex
that is part of the clique. Any path that takes one or more of the time-edges in layer~1 of~$\tE_C$ can therefore
be shortened to a path that goes directly from~$s$ to the last vertex reached by taking one of the new time-edges, eliminating all
of these new edges from the path. Equivalently, all vertices in the clique in layer~4 have a time-edge to~$z$. So instead
of taking any of the time-edges leading to other members of the clique, any ($s,z$)-path could just directly take the edge to~$z$.
This means that there are no temporal ($s,z$)-paths in~$\tG''$ that do not get cut by an ($s,z$)-separator of~$\tG'$.

$\Leftarrow$: Any ($s,z$)-separator of~$\tG''$ must cut all temporal ($s,z$)-paths in~$\tE_\alpha$,
thereby also cutting all temporal ($s,z$)-paths in~$\tE'$, making it a
($s,z$)-separator of~$\tG'$, too.\qed
\end{proof}

Theorem~\ref{thm:ts_tau_np} shows that \TS{} on temporal split graphs is NP-hard for~$\tau\geq 4$.
Clearly, \TS{} is polynomial-time solvable for~$\tau=1$.
The computational complexity for the cases $\tau=2$ and $\tau=3$ remains open.

\subsection{Fixed-Parameter Tractability Results}\label{chap:split:switch}

The defining characteristic of temporal split graphs is that for each layer~$t$ they can be split into a clique~$\doms_t$ and an independent set~$\isos_t$.
While temporal split graphs allow for changes of this partition,
for several application scenarios described in the introduction it 
is reasonable to assume that only a few of these changes occur
while most vertices retain their role throughout all layers.

We will prove that \TS{} can be solved efficiently in this setting, that is,
if the number of \emph{switching vertices} $\bigcup_{t, t'} \doms_t \cap \isos_{t'}$ is low.
To this end, we first prove that \TS{} is polynomial-time solvable if the only switching vertices are~$s$~and~$z$.

\begin{lemma}\label{thm:noway}
Let $\tG=(V\cup\{s,z\},\tE,\tau)$ be a temporal split graph with a given temporal split partition having
no switching vertices except possibly $s$~and~$z$.
Then all minimal temporal ($s,z$)-separators in~$\tG$ can be found in $\bigO(\abs{\tG} \cdot\tau)$~time.
\end{lemma}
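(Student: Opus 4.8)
The plan is to exploit the rigidity imposed by a fixed split partition: since no vertex other than possibly $s$ and $z$ ever switches sides, the set $\doms := \bigcap_t \doms_t \setminus \{s,z\}$ is a clique in \emph{every} layer, and $\isos := \bigcup_t \isos_t \setminus \{s,z\}$ is an independent set in every layer. The key structural observation I would establish first is that any temporal $(s,z)$-path has a very restricted shape: between any two consecutive clique-vertices it visits, it can be shortened to use at most one intermediate vertex, and in fact one can argue that a shortest temporal $(s,z)$-path alternates between $\doms$ and $\isos$ in a controlled way. More precisely, I would argue that every temporal $(s,z)$-path can be shortened to one of the form $s, (\text{possibly one vertex}), x_1, x_2, \ldots, x_\ell, (\text{possibly one vertex}), z$ where $x_1, \ldots, x_\ell \in \doms$ — because whenever the path enters $\isos$ at some vertex $u$, the next vertex must be adjacent to $u$ in that layer, hence lies in that layer's clique, i.e.\ in $\doms$ (or is $s$ or $z$). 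So independent-set vertices only ever appear ``sandwiched'' between clique vertices (or between $s$/$z$ and a clique vertex).

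Next I would set up the algorithmic core. Because the $x_i$ all lie in the common clique $\doms$, the edges $\{x_i, x_{i+1}\}$ are present in every layer; the only temporal constraint is that the time labels along the path are nondecreasing. This means that a temporal $(s,z)$-path is essentially determined by: an initial ``hop'' from $s$ into $\doms$ (directly, or via one $\isos$-vertex or via a $v$-type vertex adjacent to $s$) available at some time $t_0$; a sequence of clique vertices that can be traversed at nondecreasing times — but since clique edges are available at all times, any permutation works as long as we are allowed to ``wait'' — so essentially the question becomes whether $s$ can reach $\doms$ at some time, then reach some clique vertex from which $z$ is reachable at a later-or-equal time, possibly via one final $\isos$-vertex. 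I would formalize this by reducing to reachability in an auxiliary static directed graph on $\bigO(\abs{V})$ vertices where each clique vertex is annotated with the earliest time it becomes reachable from $s$ and the latest time from which it can still reach $z$; a temporal path exists iff some clique vertex has earliest-arrival $\le$ latest-departure. A separator must then hit, for every such ``witness'' configuration, at least one internal vertex.

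To characterize \emph{all minimal} separators (not just decide existence), I would translate the path structure into a small vertex-cut problem. Concretely: build a static directed graph $H$ with $s$, $z$, a copy of each $\doms$-vertex (with appropriate in-time/out-time annotations collapsed, since clique edges are omnipresent), and a copy of each $\isos$-vertex and each $s_v/z_v$-type vertex on the relevant in- or out-hop; put an arc whenever the corresponding temporal transition respects time labels. The crucial point is that a vertex set $S \subseteq V$ destroys all temporal $(s,z)$-paths in $\tG$ if and only if $S$ is an $(s,z)$-vertex-separator in $H$; the ``$\Leftarrow$'' direction uses the shortening argument above, and ``$\Rightarrow$'' is immediate since every $H$-path lifts to a temporal path. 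Then all minimal temporal $(s,z)$-separators of $\tG$ are exactly the minimal $(s,z)$-vertex-separators of $H$, which are well-understood (e.g.\ via submodularity of the cut function / the lattice of minimum cuts, or simply enumerated). Building $H$ takes $\bigO(\abs{\tG})$ time, with an extra $\tau$ factor coming from computing, for each vertex, the earliest arrival time from $s$ and the symmetric latest-departure-to-$z$ time by a single sweep over all $\tau$ layers — giving the claimed $\bigO(\abs{\tG}\cdot\tau)$ bound.

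The main obstacle I anticipate is pinning down the shortening/normal-form lemma with full care, especially handling the boundary behavior at $s$ and $z$ (which \emph{are} allowed to switch sides) and making sure that when an $\isos$-vertex sits between two clique vertices at times $t < t'$, there is genuinely no timing obstruction — one has to verify that the clique edges being available at \emph{all} times is exactly what lets us freely reorder/wait among $\doms$-vertices, and that we never need two consecutive $\isos$-vertices. A secondary subtlety is ensuring the reduction to static vertex cuts preserves \emph{minimality} in both directions: I would need that no minimal temporal separator contains a vertex that never appears on any (shortened) temporal path, which follows from the normal form, and conversely that every minimal static $(s,z)$-separator in $H$ is ``realized'' by an actual temporal path it must block.
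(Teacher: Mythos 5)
Your structural first half matches the paper's argument: because $\doms$ is a clique in every layer, any temporal $(s,z)$-path of length at least~$3$ can be rerouted so as to avoid $\isos$ entirely (shortcutting past each sandwiched $\isos$-vertex via a clique edge available in the relevant layer), while length-$2$ paths force the fixed set $T=\bigcup_{t\le t'} N_{G_t}(s)\cap N_{G_{t'}}(z)$ into every separator. The gap is in your final step. The lemma does not merely ask to decide reachability or to describe separators abstractly; it claims that \emph{all minimal} temporal $(s,z)$-separators can be \emph{found} in $\bigO(\abs{\tG}\cdot\tau)$ time, which implicitly requires proving that there are only very few of them. You reduce this to enumerating all minimal $(s,z)$-vertex-separators of an auxiliary static graph $H$ and then assert that these are ``well-understood \dots or simply enumerated''. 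That step does not hold as stated: a static graph can have exponentially many inclusion-minimal $(s,z)$-separators, and neither submodularity nor the lattice of \emph{minimum} cuts controls the family of all minimal separators; generic enumeration algorithms run in time proportional to the (here unbounded, a priori) number of separators. So even if the correspondence between temporal separators of $\tG$ and vertex separators of $H$ is worked out carefully (including discarding the irrelevant $\isos$-vertices, which you do flag), your argument does not deliver the claimed time bound.

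What is missing is precisely the paper's key observation: if a single vertex $v\in N_{G_{i+1}}(s)\cap\doms$ survives, then $s$ reaches \emph{every} surviving vertex of $\doms$ by time $i+1$, so any separator is determined, beyond the forced set $T$, by a single time threshold $i$ and must contain $\bigcup_{0<t\le i} N_{G_t}(s)\cap\doms$ together with $\bigcup_{i<t\le\tau} N_{G_t}(z)\cap\doms$. Hence there are at most $\tau+1$ candidate minimal separators, and they can be written down directly in $\bigO(\abs{\tG}\cdot\tau)$ time. In your framework this amounts to proving that $H$ has a threshold structure (an arc from $u$ to $w$ exactly when the earliest $s$-edge time of $u$ is at most the latest $z$-edge time of $w$) and deducing that its minimal separators are exactly these $\tau+1$ prefix/suffix sets; without that argument the proof is incomplete. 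Two smaller points: the ``$s_v/z_v$-type vertices'' you invoke belong to the NP-hardness construction of Theorem~2.1, not to this lemma; and your earliest-arrival/latest-departure annotations must be justified as being meaningful \emph{after} deletion of an arbitrary separator, which is again exactly the clique argument above rather than a precomputation on the intact graph.
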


\begin{proof}
We assume that there is never an edge between $s$~and~$z$, otherwise the problem is trivial.
Let the given temporal split partition be $(\doms_t, \isos_t)_t$.
Let $\doms := \doms_t \setminus \{s, z\}$ and $\isos := \isos_t \setminus \{s, z\}$ (for some and thus all layers~$t$).
We show that all minimal temporal $(s, z)$-separators are given by the set
\begin{align*}
	\Ss:={}&\left\{\bigcup_{0<t\le i}(N_{G_t}(s)\cap\doms) \cup\bigcup_{i<t\le\tau}(N_{G_t}(z)\cap\doms)\cup T
	\mvert  0\le i\le\tau\right\},\text{ where}\\
	T:={}& \bigcup_{1\le t\le t'\le\tau} N_{G_t}(s) \cap N_{G_{t'}}(z) .
\end{align*}
The set~$\Ss$ can be constructed in~$\bigO(\abs{\tG}\cdot\tau)$ time
and contains at most~$\tau+1$ elements.
This proves the stated time bound.
It remains to verify that $\Ss$ contains all minimal separators.

First, note that $T$~contains exactly those vertices which form temporal ($s,z$)-paths of length~2.
Thus $T$ has to be contained in any separator.

So it only remains to consider temporal $(s, z)$-paths of length at least~$3$.
If such a path~$P$ contains a vertex $v \in I$,
then let $(\{u,v\}, t)$ and  $(\{v,w\}, t')$ be the two time-edges of $P$ containing~$v$.
Then we must have $u \in C$ or $w \in C$ and thus the above two time-edges can be replaced 
by either $(\{u, w\}, t)$ or $(\{u, w\}, t')$, shortening~$P$ by one.
Consequently, if $\tG$ contains a temporal $(s,z)$-path of length at least 3,
then there is also a temporal $(s,z)$-path in~$\tG - I$.\footnote{We denote by $\tG - X$ the temporal graph resulting
from removing vertices in $X$ from the vertex set of $\tG$.}
So it suffices to consider temporal paths in~$\tG - I$.

Thus it becomes clear that each element of~$\Ss$ is in fact a temporal $(s,z)$-separator.
Now let~$S$ be an arbitrary temporal $(s,z)$-separator and $i$ maximal with $S \supseteq \bigcup_{0<t\le i}N_{G_t}(s)\cap\doms$.
Thus,~$S$~does not contain some vertex~$v\in N_{G_{i+1}}(s) \cap \doms$.
Then, $\tG{} - S$ contains a path from $s$ to every vertex in $\doms \setminus S$ since~$v$ is connected to all other vertices in~$\doms$.
So starting from layer~$i+1$, all edges to~$z$ from a vertex in~$\doms \setminus S$ would complete a temporal path from~$s$ to~$z$.
Hence, $S \supseteq \bigcup_{i<t\le\tau}N_{G_t}(z)\cap\doms$,
concluding the proof.\qed
\end{proof}

\newcommand{\splits}{\Psi}
\newcommand{\suredoms}{\hat{\doms}}
\newcommand{\sureisos}{\hat{\isos}}
Next, we 
first show that this is still an NP-hard problem and then give an FPT-algorithm for the solution size parameter. 
\begin{proposition}
\label{thm:switch-num-hard}
For temporal split graphs, it is NP-hard to compute a minimum-size set of 
switching vertices. 
\end{proposition}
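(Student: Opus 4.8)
The plan is to reduce from \textsc{Vertex Cover}. Given an instance $(H=(V_H,E_H),k)$ of \textsc{Vertex Cover} with $E_H\neq\emptyset$ (the empty-edge case is trivial), I build a temporal split graph $\tG$ on vertex set $V_H\cup\{a,b\}$ with two fresh vertices $a,b$ and $|E_H|+1$ layers: for every edge $e=\{u,v\}\in E_H$ there is a layer $L_e$ whose unique edge is $\{u,v\}$ (all other vertices isolated), and one extra layer $M$ whose unique edge is $\{a,b\}$ (all other vertices isolated). Each layer is a single edge plus isolated vertices, hence a split graph. The claim to be proven is that the minimum number of switching vertices of $\tG$ taken over all temporal split partitions equals $\nu+1$, where $\nu$ is the vertex cover number of $H$; consequently deciding whether $\tG$ admits a temporal split partition with at most $k+1$ switching vertices is exactly deciding whether $\nu\le k$, so NP-hardness of \textsc{Vertex Cover} transfers. (Adding two permanently isolated vertices $s,z$, which never switch, additionally shows the hardness persists when $s$ and $z$ are excluded from the count, as in the parameterization used afterwards.)

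The one structural fact needed is about these very simple layers: if a graph consists of a single edge $\{x,y\}$ plus isolated vertices, then every split partition $(\doms,\isos)$ satisfies $\doms\in\{\{x\},\{y\},\{x,y\}\}$. Indeed, $\doms$ is a clique, so it is either a subset of $\{x,y\}$ or a single isolated vertex; but if $\doms$ is empty or a lone isolated vertex, then $\isos$ contains the edge $\{x,y\}$, a contradiction. Applying this: in a layer $L_e$ with $e=\{u,v\}$, at least one of $u,v$ lies in $\doms_{L_e}$ while everything else --- in particular $a$, $b$, and every vertex of $V_H\setminus\{u,v\}$ --- lies in $\isos_{L_e}$; and in layer $M$ at least one of $a,b$ lies in $\doms_M$ while all of $V_H$ lies in $\isos_M$.

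For correctness, first take an arbitrary temporal split partition and let $W=\bigcup_{t,t'}\doms_t\cap\isos_{t'}$ be its switching set. Every $v\in V_H$ lies in $\isos_M$, so $v\in W$ whenever $v$ lies in $\doms_{L_e}$ for some $e$; since each $L_e$ forces at least one endpoint of $e$ into its clique, $W\cap V_H$ is a vertex cover of $H$, hence $|W\cap V_H|\ge\nu$. Also some endpoint of $\{a,b\}$ lies in $\doms_M$ while $a,b\in\isos_{L_e}$ for every $e$ (such a layer exists since $E_H\neq\emptyset$), so $W$ meets $\{a,b\}$; as $\{a,b\}$ is disjoint from $V_H$ this gives $|W|\ge\nu+1$. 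Conversely, fix a minimum vertex cover $S$ of $H$, set $\doms_{L_e}:=S\cap e$ (nonempty because $S$ covers $e$) and $\doms_M:=\{a\}$, with the independent sets the respective complements; this is a valid temporal split partition in which a vertex of $V_H$ switches iff it lies in some $\doms_{L_e}$, i.e.\ iff it lies in $S$, while $a$ switches and $b$ stays in the independent set throughout. So its switching set is exactly $S\cup\{a\}$, of size $\nu+1$, matching the lower bound.

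The reduction is clearly polynomial, so these ingredients suffice. I expect the only point requiring care is the structural fact about one-edge layers --- in particular ruling out the degenerate split partitions that take a lone isolated vertex as the whole clique so that the unique edge would fall into the independent set --- together with correctly tracking the additive constant and noting that $E_H\neq\emptyset$ is exactly what forces a vertex of $\{a,b\}$ to switch.
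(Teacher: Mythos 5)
Your proof is correct. It follows the same high-level route as the paper---a polynomial-time reduction from \textsc{Vertex Cover} with one layer per edge of $H$---but the gadget is essentially the dual of the paper's. The paper's layers are near-complete: each edge-layer is a clique on the vertex copies with exactly one edge removed, anchored by a first layer (clique plus pendant matching) whose split partition is forced, so a vertex switches precisely when it must \emph{leave} the clique in the layer of an edge it covers, and the minimum number of switching vertices equals the vertex cover number exactly. Your layers are as sparse as possible: each contains a single edge, anchored by the extra layer on $\{a,b\}$, which forces every vertex of $V_H$ into the independent set at that time, so a vertex switches precisely when it must \emph{enter} the clique in the layer of an edge it covers; your key structural fact (the clique side of a one-edge layer is a nonempty subset of that edge) replaces the paper's argument that the anchor layer's split partition is unique, at the cost of tracking the $+1$ offset contributed by the anchor vertex $a$ (which you handle correctly, including the need for $E_H\neq\emptyset$ and the minimality of the cover for the ``exactly $S\cup\{a\}$'' claim). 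A small bonus of your variant is that it shows hardness already for temporal split graphs in which every layer has only one edge, echoing the single-edge-per-layer restriction discussed for \TS{}; the paper's variant avoids auxiliary vertices and the additive constant.
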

\begin{proof}
We present a polynomial-time reduction from \textsc{Vertex Cover}~\cite{GJ79}.
Let $(H=(U,F), h)$ be an instance of \textsc{Vertex Cover}, then we construct a temporal split graph $\tG=(V,\tE,\tau)$ as follows.
We set $V:=\{u_1, u_2\mid u\in U\}$.
Now we create $|F|+1$ layers.
In the first layer, we create a clique on the vertices~$\{u_1 \mid u\in U\}$ and connect $u_1$ with $u_2$ for all $u\in U$. Formally,
\[
E_1(\tG):=\{\{u_1,v_1\}\mid u,v\in U\land u\neq v\} \cup \{\{u_1,u_2\}\mid u\in U\}.
\]
Now let the edges of $H$ be ordered in some fixed but arbitrary way, that is, $F=\{e_1, e_2, \ldots, e_{|F|}\}$. For $i\in [|F|]$, we create layer $i+1$ with the following edge set:
\[
E_{i+1}(\tG):=\{\{u_1,v_1\}\mid u,v\in U\land u\neq v\land \{u,v\}\neq e_i\}.
\]
This graph can clearly be constructed in polynomial time and it is easy to check that it is in fact a temporal split graph. 

\paragraph*{Correctness.} In the following we show that the cardinality of a minimum-size set of switching vertices for $\tG$ is at most $h$ if and only if $H$ admits a vertex cover of size at most $h$.

We start with some observations about the constructed temporal split graph~$\tG$. In the first layer, we have a unique valid partition of the vertices into a clique and an independent set: $C_1=\{u_1 \mid u\in U\}$ forms the clique and $I_1=\{u_2\mid u\in U\}$ forms the independent set. Vertices in $I_1$ may stay in the independent set in all other layers as well, however in every layer exactly one pair of vertices in~$C_1$ misses the connecting edge, hence at least one of them needs to be in the independent set of that layer. Intuitively, these vertices which need to switch correspond to a vertex cover in $H$. 

$\Rightarrow$: Assume $H$ contains a vertex cover $X\subset U$ of size at most $h$. Then we claim that the set $Y=\{u_1\mid u\in X\}$ forms a set of switching vertices of size $h$ for $\tG$. The following partitions of the vertices into a clique and an independent set for all layer witness this. As already described earlier, we have that $C_1=\{u_1 \mid u\in U\}$ and $I_1=\{u_2\mid u\in U\}$. For layer $i$ with $i>1$ we set $C_i=C_1\setminus\{u_1\mid u\in X \cap e_{i-1}\}$ and $I_i=I_1\cup\{u_1\mid u\in X \cap e_{i-1}\}$. First, we can observe that the vertices in~$Y$ are the only switching vertices. Furthermore, since $X$ is a vertex cover in $H$ we have that for each layer $i$, at least one of the two vertices from $C_1$ that are not connected by an edge is in $I_i$. Otherwise the edge $e_{i-1}$ would not be covered by~$X$.

$\Leftarrow$: Assume we find a set of switching vertices $Y$ of size at most $h$ for $\tG$. First, observe that $Y\cap I_1=\emptyset$ since in none of the layers, a vertex from $I_1$ can be in the clique. We claim that $X = \{u \mid u_1\in Y\}$ is a vertex cover for $H$. Assume for contradiction that there is an edge $e_i\in F$ such that $e_i\cap X=\emptyset$. Let $e_i=\{u,v\}$. Then we have that in layer $i+1$ vertices $u_1, v_1$ are both in the clique, since they are in $C_1$ and not switching. However, by construction there is no edge between $u_1$ and $v_1$ in layer $i+1$. Hence, not both vertices can be part of the clique in this layer. A contradiction.\qed
\end{proof}

\begin{proposition}
\label{thm:compute-switching}
For a temporal split graph~$\tG=(V,\tE,\tau)$,
one can find a temporal split partition minimizing the number of switching vertices in 
$\bigO(\abs{\tE} + \tau\cdot\abs{V} + \abs{V}^2 \cdot (1.2738^p + p \cdot
\abs{V}))$~time, where $p$ is the minimum number of switching vertices.
\end{proposition}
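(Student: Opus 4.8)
The plan is to reduce the problem of finding a temporal split partition with a minimum number of switching vertices to a single instance of \textsc{Vertex Cover}, then invoke the known $\bigO(1.2738^p \cdot p + p \cdot n)$ branching algorithm of Chen, Kanj, and Xia for \textsc{Vertex Cover}. First I would build, for each layer $t$, the set of ``conflict pairs'' inside that layer: roughly, a pair $\{u,v\}$ is a conflict in layer $t$ if both $u$ and $v$ would have to lie in the same part of any split partition of $G_t$ (for instance, two non-adjacent vertices both adjacent to a common third vertex that is adjacent to everything, so both are forced into the clique — or dually a triangle-type forcing into the independent set). The crucial combinatorial fact I would establish is that a vertex $v$ need not switch across the whole temporal graph if and only if there is a single consistent choice of ``role'' (clique-side or independent-side) for $v$ that is compatible with \emph{some} valid split partition in every layer; and the obstructions to such a consistent global assignment are captured by a graph $H$ on $V$ whose edges are exactly the unavoidable conflicts aggregated over all layers. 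A minimum set of switching vertices is then a minimum vertex cover of $H$.

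**Key steps in order.**
(1) For each layer $G_t$, compute in linear time its split structure: split graphs are recognizable in $\bigO(|E|)$ time, and one can compute the (short) list of vertices whose side is forced, together with the set of vertices that are ``free'' to go either way. This uses the classical degree-sequence characterization of split graphs and the fact that the ambiguity in a split partition is extremely limited. (2) Aggregate across layers: a vertex is forced to the clique side in layer $t$ and forced to the independent side in layer $t'$ — such a vertex \emph{must} switch, so it goes into the solution for free, is deleted, and we recurse on the rest (this is where the additive $\tau \cdot |V|$ and $|E|$ terms come from). (3) On the remaining vertices, build the conflict graph $H$: put an edge between $u$ and $v$ whenever there is a layer in which $u$ and $v$ cannot simultaneously take their tentatively-assigned global roles — equivalently, whenever in some layer they are forced to be on opposite sides but our global assignment would put them on the same side, or vice versa. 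Showing $|E(H)|= \bigO(|V|^2)$ and that $H$ is computable within the stated budget is routine. (4) Prove the equivalence: $Y \subseteq V$ is a feasible set of switching vertices if and only if $V \setminus Y$ admits a global role assignment consistent with every layer, if and only if $Y$ is a vertex cover of $H$. (5) Run Chen–Kanj–Xia on $H$ with parameter $p = $ (size of an optimum cover), contributing the $1.2738^p + p\cdot|V|$ factor, multiplied by the $\bigO(|V|^2)$ overhead of constructing and handling $H$.

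**The main obstacle.**
The hard part will be step (3)–(4): correctly characterizing, purely in terms of pairwise conflicts, when a set of non-switching vertices admits a globally consistent split partition across all layers. Naively one might fear that the constraint is not binary — that three vertices could be pairwise compatible but jointly infeasible in some layer. I expect the resolution is that within a single split graph the freedom is so constrained (only vertices of one specific degree value are ambiguous, and they are mutually interchangeable) that all infeasibility does in fact decompose into pairwise conflicts; making this precise, and ruling out higher-arity obstructions, is the delicate point. A secondary subtlety is the interaction with $s$ and $z$ if the statement is meant to allow them to switch freely — but since \cref{thm:compute-switching} as stated simply asks for a minimum over all switching vertices, I would treat $s$ and $z$ uniformly with the rest and not special-case them here.
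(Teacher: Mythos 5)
Your overall strategy---decompose each layer into forced-clique, forced-independent, and ambiguous vertices, discard vertices forced to opposite sides in different layers, and then solve the residual problem as \textsc{Vertex Cover} via the Chen--Kanj--Xia algorithm---is the same skeleton as the paper's proof, and your worry about higher-arity obstructions is in fact unfounded for the reason you suspect: a partition $(C_t,I_t)$ is a split partition precisely when no non-adjacent pair lies in $C_t$ and no adjacent pair lies in $I_t$, so all constraints are pairwise once roles are fixed. (The paper splits the conflict graph into two instances $A_C$ and $A_I$, one per side, since cross-side pairs never conflict; your single graph $H$ would just be their disjoint union, which is immaterial.)

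The genuine gap is your handling of vertices whose side is ambiguous in \emph{every} layer, i.e.\ the set $Q_\cap=\bigcap_t Q_t$ where $Q_t$ is the ambiguous part of layer $t$. Your conflict edges are defined relative to ``tentatively-assigned global roles'', but for a vertex in $Q_\cap$ no layer forces a role, and which conflicts it incurs depends on which side you choose for it---so there is no fixed graph $H$ whose vertex covers are exactly the feasible switching sets. Worse, the constraint on $Q_\cap$ is not a vertex-cover-type constraint at all: if $Q_\cap$ induces a clique in one layer and an independent set in another, then at most one of its vertices may permanently take the clique side and at most one the independent side, so all but (a well-chosen) two must switch, and the identity of the two survivors interacts with the covers of $A_C$ and $A_I$ (the survivor must additionally be conflict-free on its side). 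The paper resolves exactly this by brute-force guessing the up-to-two saved vertices $q_C,q_I\in Q_\cap$, excluding them from the covers, and recomputing minimum vertex covers for each of the $\bigO(\abs{V}^2)$ guesses---which is where the multiplicative $\abs{V}^2$ in the running time comes from, not from ``overhead of constructing $H$'' as you assumed. Without this guessing step (or an equivalent device), your step (4) equivalence ``$Y$ feasible $\iff$ $Y$ is a vertex cover of $H$'' fails, and your single \textsc{Vertex Cover} call does not compute the optimum.
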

\begin{proof}
Let $G_t$ be any layer of $\tG$.
By a result of \citet{Heggernes06}, we can find in $\bigO(\abs{G_t})$~time a partition
$V = \suredoms_t \cup \sureisos_t \cup Q_t$ (each part possibly empty)
such that $\suredoms_t \subseteq \doms$ and $\sureisos_t \subseteq \isos$ for every split partition $(\doms, \isos)$ of~$G_t$.
Furthermore,~$Q_t$ forms either a clique or an independent set and every vertex of~$Q_t$ is adjacent to all vertices of $\suredoms_t$ and to no vertex of $\sureisos_t$.
If $Q_t$ is a clique, then all possible split partitions of~$G_t$ are clearly given by
\[
	\{(V \setminus \isos, \isos) \mid \sureisos_t \subseteq \isos \subseteq \sureisos_t \cup Q_t, \abs{\isos \cap Q_t} \leq 1\}
\]
and if~$Q_t$ is an independent set, then all split partitions of~$G_t$ are given by
\[
	\{(\doms, V \setminus \doms) \mid \suredoms_t \subseteq \doms \subseteq \suredoms_t \cup Q_t, \abs{\doms \cap Q_t} \leq 1 \}.
\]

Suppose now we are given a temporal split partition~$(\doms_t, \isos_t)$ of~$\tG$.
Denote by~$T$ the set of switching vertices of this partition,
and by $\doms := \bigcap_t \doms_t$ and $\isos := \bigcap_t \isos_t$ the set of non-switching vertices.

Let $Q_\cap := \bigcap_{t=1}^\tau Q_t$.
If there exist $t, t'$ such that $Q_\cap$ is a clique in $G_t$ but an independent set in $G_{t'}$,
then all but two vertices in~$Q_\cap$ must be switching vertices,
as $Q_\cap \cap I_t$ and $Q_\cap \cap C_{t'}$ can each contain at most one vertex.

Clearly, all vertices in $S := \bigcup_{t \neq t'} \suredoms_t \cap \sureisos_{t'}$ must also be switching vertices.
The remaining vertices $\tilde{V} := V \setminus (Q_\cap \cup S)$ can be partitioned into two sets.
The first one,~$\tilde{\doms} := \bigcap_t (\suredoms_t \cup Q_t) \setminus Q_\cap$
contains vertices which are sometimes in~$\suredoms_t$ and sometimes in~$Q_t$.
The second one,~$\tilde{\isos} := \bigcap_t (\sureisos_t \cup Q_t) \setminus Q_\cap$
contains vertices which are sometimes in~$\sureisos_t$ and sometimes in~$Q_t$.

We build an auxiliary graph $A_\doms$ as follows:
Start with the vertex set $\tilde{\doms}$.
Additionally, if there is a vertex $q_\doms \in Q_\cap \cap \doms$,
then also add $q_\doms$ to the vertices of $A_\doms$.
Then connect any two vertices $v, w$ in $A_\doms$ if and only if they are \emph{not} connected in some layer of $\tG$.
Analogously, we build another graph $A_\isos$.
The vertex set of $A_\isos$ is given by $\tilde{\isos}$ and, possibly,
the unique vertex $q_\isos \in Q_\cap \cap \isos$.
We connect two vertices $v, w$ in $A_\isos$ if and only if they are connected in some layer of $\tG$.

Observe that $T \cap V(A_\doms)$ forms a vertex cover of $A_\doms$:
if $\{v, w\} \in E(A_\doms)$, then there is a layer with $\{v, w\} \notin E(G_t)$,
thus at least one of $v, w$ must be in $\isos_t$ and thus in $T$,
since every vertex of $A_\doms$ is contained in some $C_{t'}$.
Analogously, $T \cap V(A_\isos)$ forms a vertex cover of $A_\isos$:
if $\{v, w\} \in E(A_\isos)$, then there is a layer with $\{v, w\} \in E(G_t)$,
therefore one of $v, w$ is in $\doms_t \cap \tilde{\isos} \subseteq T$.
Obviously, these vertex covers do not contain $q_\doms$ or $q_\isos$.

We claim that we can use the converse direction to obtain a temporal split partition of $\tG$.
First, we compute $\suredoms_t$, $\sureisos_t$, and $Q_t$ for each layer~$G_t$.
Then we select up to two vertices $q_\doms, q_\isos \in Q_\cap$.
Having these, we build the two auxiliary graphs~$A_\doms$ and~$A_\isos$ as
above.
We then compute two vertex covers~$T_\doms$ and~$T_\isos$, excluding $q_\doms$ and $q_\isos$ if they exist.
The temporal split partition $(\doms_t, \isos_t)_t$ is then constructed as follows.
Clearly, $\doms_t \supseteq \suredoms_t$ and $\isos_t \supseteq \sureisos_t$.
If $Q_t = \emptyset$, then this already defines~$(C_t, I_t)$.

If $Q_t$ is a clique in $G_t$, then we add to $\isos_t$
any vertices in $V(A_\isos) \setminus T_\isos$
and all remaining vertices to $\doms_t$.
Since $T_\isos$ is a vertex cover of $A_\isos$, $Q_t \cap (V(A_\isos) \setminus T_\isos)$ contains at most one vertex, 
making this a valid split partition of~$G_t$.
If $Q_t$ is an independent set in $G_t$, then we analogously add to $\doms_t$
any vertices in $V(A_\doms) \setminus T_\doms$
and all remaining vertices to $\isos_t$.

It is not difficult to check that the switching vertices of the resulting temporal split partition are exactly the vertices in $S \cup T_\doms \cup T_\isos \cup (Q_\cap \setminus \{q_\doms, q_\isos\})$.
To ensure that the resulting temporal split partition has a minimum number of switching vertices, we can simply try all possible choices for $q_\doms$ and $q_\isos$,
each time computing minimum vertex covers $T_\isos$ and $T_\doms$.

The required time to compute all $\suredoms_t, \sureisos_t, Q_t$
as well as $Q_\cap$, $S$, $A_\doms$ and $A_\isos$ is in $\bigO(\sum_t \abs{G_t}) \subseteq \bigO(\abs{\tE} + \tau\abs{V})$.
For each of the $\bigO(\abs{V}^2)$ possible ways of choosing $q_\doms, q_\isos$,
we need to compute minimum size vertex covers for $A_\doms$ and $A_\isos$.
Each of these computations is possible in $\bigO(1.2738^p + p\cdot\abs{V})$~time \cite{ChenVertexCover}.
The overall time requirement is thus in $\bigO(\abs{\tE} + \tau\cdot\abs{V} +
\abs{V}^2 \cdot (1.2738^p + p \cdot \abs{V}))$.\qed
\end{proof}

Based on \cref{thm:noway,thm:compute-switching}, we can now show a fixed-parameter
algorithm for the parameter lifetime $\tau$ combined with the parameter
number~$p$ of switching vertices apart from~$s, z$.

\begin{theorem}
Let $\tG$ be a temporal split graph with at most $p$ switching vertices apart from~$s$ and~$z$.
Then \TS{} on $\tG$ can be solved
in~$\bigO( (\tau + 1)^{3^p(p+1)}\abs{\tG} + 1.2738^p \cdot \abs{V}^2 + p \cdot \abs{V}^3 )$ time.
\end{theorem}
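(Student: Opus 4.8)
\emph{Overview and Stage~1.}
The plan has two stages.
Using \cref{thm:compute-switching} --- with the harmless modification that $s$ and $z$ are exempt from the count, which only deletes two vertices from the auxiliary vertex-cover instances in its proof --- we first fix, in $\bigO(\abs{\tE}+\tau\abs{V}+\abs{V}^2(1.2738^p+p\abs{V}))$ time, a temporal split partition of $\tG$ minimizing the number of switching vertices other than $s,z$; by hypothesis this number is at most $p$.
This accounts for the last two terms of the claimed running time, while the remaining lower-order terms $\abs{\tE}+\tau\abs{V}$ are absorbed into the first term below.
It remains to solve \TS{} when the only switching vertices are $s$, $z$, and a known set $W$ with $\abs{W}\le p$, and the second stage brute-forces the behaviour of $W$ to reduce this to the switching-free case of \cref{thm:noway}.

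\emph{Stage~2: branching.}
For a hypothetical minimum separator $S$, every $v\in W$ falls into one of three situations: $v\in S$; or $v\notin S$ and $v$ is reachable from $s$ in $\tG-S$; or $v\notin S$ and $v$ reaches $z$ in $\tG-S$ (the possibility $v\notin S$ but $v$ in neither side being folded into the second one with an ``infinite'' activation time).
We branch over all $3^{\abs{W}}\le 3^p$ such assignments; within a branch we delete the vertices put into $S$, decreasing $k$ accordingly, and for each remaining $v\in W$ we guess one further value in $\{1,\dots,\tau,\tau+1\}$ --- the first layer in which $v$ becomes reachable from $s$, respectively the last layer in which $v$ still reaches $z$.
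Together with the single index $i\in\{0,\dots,\tau\}$ parametrizing the output of \cref{thm:noway}, a branch is determined by at most $p+1$ such values, so there are at most $3^p(\tau+1)^{p+1}$ branches, which we bound crudely by $(\tau+1)^{3^p(p+1)}$ (and for $p=0$ no branching is needed at all).

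\emph{Stage~2: solving and assembling a branch.}
Within a branch we form a temporal graph $\tG'$ by turning each $v\in W$ assigned to the ``$s$-side'' into a dedicated clone of $s$ that carries $v$'s edges only from its guessed activation time onward (and deleting $v$'s earlier edges), and symmetrically for the ``$z$-side''; equivalently one may leave the graph alone and re-run the proof of \cref{thm:noway} with these committed roles.
Placing the clones on the appropriate side of each layer's partition, $\tG'$ is again a temporal split graph whose only switching vertices are $s$ and $z$, so \cref{thm:noway} lists all $\le\tau+1$ minimal temporal $(s,z)$-separators of $\tG'$ in $\bigO(\abs{\tG}\cdot\tau)$ time; we take the one for the guessed index $i$, add back the deleted $W$-vertices, and --- since a wrongly guessed branch can only mis-estimate reachability --- verify in $\bigO(\abs{\tG})$ time that the assembled set is genuinely an $(s,z)$-separator of $\tG$, outputting the smallest set that passes this test over all branches.
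For correctness, a fixed minimum separator $S^\star$ of $\tG$ induces a consistent branch on which the activation times are exact; there $S^\star\setminus W$ separates $\tG'$ (any walk through a clone splices with an $s$-to-$v$ walk in $\tG-S^\star$, whose edge-labels are at most the activation time and hence fit before the clone's part of the walk), so \cref{thm:noway} supplies a separator of $\tG'$ of size at most $\abs{S^\star\setminus W}$, and the assembled, verified set has size at most $\abs{S^\star}$.
Multiplying the per-branch cost $\bigO(\abs{\tG}\cdot\tau)$ by the number of branches and adding the Stage-1 cost gives the stated bound.

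\emph{Main obstacle.}
The delicate point is the reduction in the last stage: the clone gadget must be designed so that every layer of $\tG'$ stays a split graph --- in particular one cannot simply identify $v$ with $s$, since giving the possibly low-degree vertex $s$ extra neighbours can destroy the clique/independent-set partition, so clones must remain separate vertices placed carefully on each layer's clique or independent side --- while at the same time reachability in $\tG'$ must coincide with reachability in $\tG-S^\star$ on the consistent branch.
Pinning down this correspondence is precisely what forces the guessed activation times and the ``neither side $=$ activation time $\infty$'' convention; the rest (branch counting, budget bookkeeping, final verification) is routine.
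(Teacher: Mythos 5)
Your Stage~2 is where the proof breaks: the flat one-shot reduction is not shown to be \emph{complete}, and as described it is not. On the ``consistent'' branch your graph $\tG'$ deletes every edge that a source-side clone $v$ has before its guessed activation time $a_v$ (dually for sink-side clones, and \emph{all} edges of a ``neither-side'' vertex). Hence $\tG'$ does not contain all temporal $(s,z)$-paths of $\tG$ minus the deleted $W$-vertices: a path that enters and leaves $v$ strictly before $a_v$ is perfectly possible in $\tG$ (the time $a_v$ is only the earliest arrival in $\tG-S^\star$, not in $\tG$ minus the set your algorithm actually outputs), and it has no counterpart in $\tG'$. So a separator of $\tG'$ together with $W\cap S^\star$ need not separate $\tG$. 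Your completeness argument only bounds the \emph{size} of the assembled set; it never shows that it passes the final verification. In fact \cref{thm:noway} returns only \emph{minimal} separators of $\tG'$: while $S^\star\setminus W$ does separate $\tG'$ on the exact branch, it may be non-minimal there (some of its vertices may be needed only for the invisible paths), so the listed set you pick can be a proper subset of $S^\star\setminus W$; since $S^\star$ is minimum and hence minimal in $\tG$, the assembled set is then a proper subset of $S^\star$ and is \emph{not} a separator of $\tG$, so it fails your own verification, and you give no argument that any other choice of activation times or index $i$ succeeds. The verification step guarantees soundness but cannot supply completeness.

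This interaction between switching vertices is exactly what the paper's proof pays for: it branches on one switching vertex $v$ at a time, guesses a single threshold $t$ for it, and the two sub-instances (separate $s$ from $v$ in layers $1,\dots,t$; separate $v$ from $z$ in layers $t+1,\dots,\tau$) still contain the remaining switching vertices as switching vertices and are solved \emph{recursively}; the nesting is why the bound is $(\tau+1)^{3^p(p+1)}$ rather than your claimed $3^p(\tau+1)^{p+1}$ branches --- that your flat scheme has exponentially fewer branches should itself have been a warning sign. Two further steps are asserted rather than proved: (i) $\tG'$ is not ``a temporal split graph whose only switching vertices are $s$ and $z$'' --- a clone is still the original vertex $v$ and may lie in the clique in some of its active layers and in the independent set in others, so \cref{thm:noway} does not apply as stated; you would need a multi-terminal variant in which all terminals may switch and in which source-to-sink paths may pass through other undeletable terminals, which the phrase ``re-run the proof'' does not establish; (ii) the modification of \cref{thm:compute-switching} exempting $s,z$ needs at least a sentence (minimum vertex covers avoiding two prescribed vertices). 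Even granting (i) and (ii), the completeness gap above remains, so the proposal does not establish the theorem.
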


\begin{proof}
We begin with using \cref{thm:compute-switching} to compute a temporal split partition of~$\tG$.
We then use induction to show that for all values of~$p$ there are at most~$(\tau+1)^{3^p(p+1)}$ minimal temporal $(s,z)$-separators,
which can all be found in~$D(\tau+1)^{3^p(p+1)}\abs{\tG}$ time for some constant~$D$.

For the case~$p=0$, by \cref{thm:noway}, we can find all minimal separators
of which there are at most~$\tau+1$, in $\bigO(\abs{\tG}\cdot\tau)$~time.

Now for the induction step suppose that our claim holds whenever the number of switching vertices (apart from $s,z$) is at most~$p-1$.
We choose a switching vertex~$v$ from~$\tG$ ($v \notin \{s, z\}$).
The subgraph~$\tG-v$ then contains~$p-1$ switching vertices apart from $s,z$,
therefore we can find all its minimal separators in~$D(\tau+1)^{3^{p-1}p}\abs{\tG}$ time.
Since a separator of~$\tG$ is also a separator of~$\tG-v$, every minimal separator of~$\tG$ must contain a separator of~$\tG-v$.
We will base our separators for~$\tG$ on the minimal separators of~$\tG-v$, henceforth called \emph{base separators},
by finding all possible combinations of vertices that can be added to turn them into minimal separators of~$\tG$.

Because we only added~$v$, all temporal ($s,z$)-paths left in~$\tG$ after removing a separator of~$\tG-v$ must pass through~$v$.
Thus any separator of $\tG$  must either contain~$v$ or some other set of vertices that cuts all these paths.
To do the latter, it has to ensure that all remaining temporal $(s,v)$-paths arrive after the latest layer in which a temporal $(v,z)$-path can begin.
In other words, such a separator of~$\tG$ needs to contain a temporal ($s,v$)-separator for the layers from~$1$ to some layer~$t$, and a temporal ($v,z$)-separator from~$t+1$ to~$\tau$.
We can compute all minimal separators of this form by applying the induction hypothesis
to enumerate all temporal $(s,v)$-separators in layers $1$~through~$t$ of~$\tG - \{z\}$
and all temporal $(v,z)$-separators in layers $t+1$~through~$\tau$ of $\tG - \{s\}$.
Note that both of these are temporal split graphs with at most $p-1$ switching vertices (not counting $s$, $v$, and $z$).

So for any given~$t$, there are no more than~$((\tau+1)^{3^{p-1}p})^2$ possible separator combinations.
Additionally we have the option of simply taking~$v$.
As there are~$\tau+1$ options for~$t$ and~$(\tau+1)^{3^{p-1}p}$ base separators to choose from,
the overall number of minimal temporal $(s,z)$-separators is thus upper-bounded by
\begin{align*}
	&(\tau+1)^{3^{p-1}p} \left( \left((\tau+1)^{3^{p-1}p}\right)^2(\tau+1)+1\right)
	\leq (\tau+1)^{3^{p}(p+1)}.
\end{align*}
We require $D(\tau+1)^{3^{p-1}p}\abs{\tG}$~time to find all base separators.
In addition, for each $t \in \{0, \dots, \tau\}$ we need $2D(\tau+1)^{3^{p-1}p} \abs{\tG}$~time to compute all minimal $(s,v)$- and $(v,z)$-separators.
Afterwards we need $\bigO(\abs{V})$ time to output each found separator.
The overall time required thus is
\begin{align*}
	&D(\tau+1)^{3^{p-1}p}\abs{\tG} \cdot (1 + 2(\tau+1))
	+ D(\tau+1)^{3^{p-1}2p} \abs{V}
	\\ \leq{}
	&D(\tau+1)^{3^p (p+1)}\abs{\tG}
\end{align*}
where we assumed $\tau \geq 3$.
This completes the induction.
Together with the time required for~\cref{thm:compute-switching},
we obtain an overall upper time bound of
\begin{align*}
&\bigO\left( (\tau + 1)^{3^p(p+1)} \abs{\tG} 
+ \abs{\tE} + \tau \cdot \abs{V} + \abs{V}^2 \cdot (1.2738^p + p\cdot \abs{V}) \right)
\\ \subseteq {} &
\bigO\left( (\tau + 1)^{3^p(p+1)}\abs{\tG} + 1.2738^p \cdot \abs{V}^2 + p \cdot
\abs{V}^3 \right) .
\end{align*}\qed
\end{proof}
We leave open whether for the single parameter number~$p$
of switching vertices \TS{} is \fpt{}.


\section{Permutation Graphs}\label{chap:perm}
In this section, we investigate the complexity of finding temporal
separators in \emph{temporal permutation graphs}.
A permutation graph is defined by a ordered set of vertices (say $1, \dots, n$)
and any permutation of that vertex set. Two
vertices are connected by an edge if and only if their relative order is inverted by the permutation.

Formally,
we call a graph~$G=(V,E)$ with vertex set~$V = [n] := \{1, 2, \dots, n\}$
a \emph{permutation graph},
if there exists a permutation~$\pi : V \rightarrow V$ of the vertices
such that for any two vertices $v < w$, 
we have $\{v,w\} \in E$ if and only if $\pi(w)< \pi(v)$.
Clearly,~$\pi$ is then uniquely determined by~$G$.
Note that we follow the definition of \citet{EvenPermutation72} in which the vertices are already labeled;
some authors also define a permutation graph as any (unlabeled) graph for which such a labeling can be found~\cite{Gol04}.

We call a temporal graph~$\tG=([n],\tE,\tau)$ a \emph{temporal permutation graph}
if there exist $\tau$ permutations~$\pi_1, \ldots, \pi_\tau$ of
the vertices such that, for any two vertices $v < w$,
we have $(\{v,w\},t) \in \tE$ if and only if $\pi_t(w)<\pi_t(v)$.

One can visualize a permutation with a \emph{matching diagram} \cite{EvenPermutation72}.
A matching diagram for a given permutation graph $G=([n],E)$ with permutation~$\pi$ can be constructed by drawing~$n$ points on a
horizontal line and another~$n$ points on a parallel line below it.
Then each vertex~$i$ is represented by a straight line segment
connecting the $i$-th point on the top line to the $\pi(i)$-th point on the bottom line.
Observe that two vertices share an edge in~$G$ if and only if their corresponding line segments cross in the matching diagram.
\cref{fig:ex_per} provides an example for a permutation graph and the matching 
diagram of its underlying permutation. 

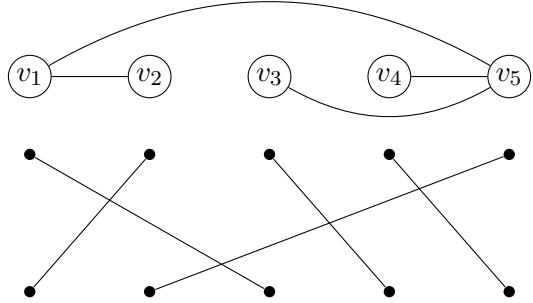
\begin{figure}[t]
	\centering
 	\tikzstyle{alter}=[circle, minimum size=16pt, draw, inner sep=1pt]
	\tikzstyle{dot}=[circle, fill, inner sep=1.5pt]
	\begin{tikzpicture}
		\node[alter] at (0,0) (v1) {$v_1$};
		\node[alter, right = of v1] (v2) {$v_2$};
		\node[alter, right = of v2] (v3) {$v_3$};
		\node[alter, right = of v3] (v4) {$v_4$};
		\node[alter, right = of v4] (v5) {$v_5$};
		
		\draw (v1) edge (v2);
		\draw (v1) [bend left=30] edge (v5);
		\draw (v3) [bend right=30] edge (v5);
		\draw (v4) edge (v5);
		
		\node[dot, below = 4ex of v1] (u1) {};
		\node[dot, below = 4ex of v2] (u2) {};
		\node[dot, below = 4ex of v3] (u3) {};
		\node[dot, below = 4ex of v4] (u4) {};
		\node[dot, below = 4ex of v5] (u5) {};
		
		\node[dot, below = 10ex of u1] (l1) {};
		\node[dot, below = 10ex of u2] (l2) {};
		\node[dot, below = 10ex of u3] (l3) {};
		\node[dot, below = 10ex of u4] (l4) {};
		\node[dot, below = 10ex of u5] (l5) {};
		
		\draw (u1) edge (l3);
		\draw (u2) edge (l1);
		\draw (u3) edge (l4);
		\draw (u4) edge (l5);
		\draw (u5) edge (l2);
	\end{tikzpicture}
	\caption{An example of a permutation graph and the corresponding matching diagram.}
	\label{fig:ex_per}
\end{figure}

First, we prove in \cref{chap:perm:hardness} that \TS{} on temporal permutation
graphs is NP-complete.
In \cref{chap:perm:swap}, we use the \emph{Kendall tau} distance~\cite{Ken38} to measure
the dissimilarity of the permutations. Recall that the Kendall tau distance is a metric
that counts the number of pairwise disagreements between two total orderings; it is also known as ``bubble sort distance'' since it measures the number of swaps needed to transform one permutation into the other. 
We show that \TS{} becomes \fpt{} when parameterized by
the combined parameter ``sum of Kendall tau distances between consecutive permutations'' and the separator size~$k$.

\subsection{Hardness Results}\label{chap:perm:hardness}

As is the case for split graphs, the class of permutation graphs contains all graphs with only one edge. This means that
\TS{} for temporal graphs in which every layer is a permutation graph is NP-hard \cite{Flu20}. We can also show that \TS{} remains NP-hard when restricted
to the temporal permutation graphs (note that this means that the vertex
ordering is the same for all layers).

\begin{theorem}
\label{thm:const}
\TS{} on temporal permutation graphs is NP-complete.
\end{theorem}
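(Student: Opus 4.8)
The plan is to give a polynomial-time reduction from a known NP-hard problem whose structure mirrors a permutation. Since \TS{} is already known to be NP-hard on general temporal graphs, and the difficulty here is precisely the rigidity that every layer must be a permutation graph with a \emph{common} vertex ordering, I would reduce from a classical NP-hard problem with a natural linear/interval structure. The most promising candidates are \textsc{Vertex Cover} (as used in \cref{thm:ts_tau_np}) or a scheduling/ordering problem; I would try to adapt the \citet{Zsc20}-style gadget construction already used above, since its temporal structure (an ``$s$-side'' layer and a ``$z$-side'' layer, with each vertex $v$ routed through private vertices $s_v, z_v$) is close to what a permutation diagram can realize.

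The key steps I would carry out are as follows. First, fix the common vertex ordering on $[n]$: I would lay out $s$ far to the left, $z$ far to the right, and for each element $v$ of the \textsc{Vertex Cover} instance place the block $s_v, v, z_v$ in some designated band, with all the $s_v$'s grouped, all the $v$'s grouped, and all the $z_v$'s grouped, so that crossings between bands are easy to control. Second, for each layer I must exhibit a permutation $\pi_t$ whose inversions are \emph{exactly} the edge set I want in that layer. The crucial observation is that a permutation graph's edge set is not arbitrary — it must be a comparability graph of a 2-dimensional partial order — so I need the per-layer edge sets of the reduction to already be permutation graphs; if the raw \citet{Zsc20} layers are not permutation graphs, I would either (a) split each layer into a constant number of permutation-graph layers, analogous to the trick in \cref{thm:ts_tau_np} where clique-completing edges were added to make split graphs, or (b) redesign the gadget so each layer is a disjoint union of ``matchings plus small cliques,'' which are permutation graphs. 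Third, I would prove the two directions of correctness: a vertex cover of size $h$ yields a temporal $(s,z)$-separator of the corresponding size, and conversely any separator must hit, for each edge of the \textsc{Vertex Cover} instance, at least one endpoint-gadget, so it projects to a vertex cover. Finally, membership in NP is immediate since a temporal $(s,z)$-separator can be verified in $\bigO(\abs{\tG})$ time, as noted in the introduction.

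The main obstacle I expect is the realizability constraint: unlike split graphs, where one can freely add a big clique to patch things up, permutation graphs are quite restrictive, and the ``transportation'' edges of the \citet{Zsc20} reduction (edges $\{s_v, z_w\}$ for $\{v,w\}\in E$) encode an arbitrary graph, which is generally \emph{not} a permutation graph within a single layer. So the real work is to engineer the layout and the permutations so that each layer's edge set is simultaneously (i) a permutation graph under the \emph{fixed} global ordering, and (ii) rich enough that the temporal $(s,z)$-paths still faithfully encode whether a candidate set is a vertex cover. I would resolve this by putting different ``functions'' into different layers — e.g.\ one layer whose permutation realizes only the $s\to s_v\to v$ hops, another realizing the edge-encoding hops $v \to \{z_w : \{v,w\}\in E\}$ perhaps one edge (or one matching) at a time across several layers, and a final layer realizing $v \to z_v \to z$ — accepting a non-constant lifetime $\tau$ (polynomial in $\abs{E}$), which is fine since \cref{thm:const} only claims NP-completeness, not a bound on $\tau$. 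Checking that no ``shortcut'' temporal path appears because of unwanted crossings forced by the permutation structure is the delicate part of the correctness argument.
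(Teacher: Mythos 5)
There is a genuine gap: your write-up is a plan whose central difficulty you correctly identify but never resolve, and that difficulty is exactly where the proof lives. For a \emph{temporal} permutation graph the vertex ordering is fixed once and shared by all layers, and a layer is admissible only if its edge set is \emph{exactly} the inversion set of some permutation of that fixed labeling. This is far more rigid than membership in the (unlabeled) permutation graph class: if $\{a,b\}$ is an edge with $a<b$, then for \emph{every} vertex $u$ with $a<u<b$ at least one of $\{a,u\},\{u,b\}$ must also be an edge (this is the single-edge case of \cref{thm:order}). Consequently your fallback (b) fails outright --- a layer consisting of one ``long'' matching edge, or a disjoint union of matchings between far-apart bands, is not realizable under the common ordering --- and your fallback (a), serializing the edge-encoding hops $\{s_v,z_w\}$ one (or a few) per layer, necessarily drags in a ``double star'' of edges from $s_v$ and $z_w$ to everything between them (other $s_u$'s, original vertices, other $z_u$'s), which is precisely the source of shortcut temporal paths you flag as ``the delicate part'' and then leave unchecked. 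Without exhibiting the per-layer permutations, fixing the layer order, and verifying both that the intended paths survive the serialization and that no forced extra inversions create new $(s,z)$-paths within the budget $n+k$, the reduction is not established; it is not at all clear that the \citet{Zsc20} gadget can be patched this way, since its correctness hinges on the $\{s_v,z_w\}$ edges connecting otherwise independent private vertices, and the forced in-between edges destroy that independence.

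For comparison, the paper does not adapt the \textsc{Vertex Cover}/\citet{Zsc20} construction at all (the clique-padding trick of \cref{thm:ts_tau_np} has no analogue here, as you note). Instead it reduces from 3-SAT with a construction designed around the rigidity: the fixed order is $s$, then variable vertices $t_i,f_i$, then clause vertices $c_{i,j}$, then $z$; every layer is given by an explicit permutation whose forced ``double star'' edges are embraced rather than avoided; and correctness is argued by tracking, layer by layer, the leftmost-to-rightmost frontier of vertices reachable from $s$ (selection of one of $t_i,f_i$ per variable, two of three $c_{i,j}$ per clause, plus a literal path linking the two choices), with an unbounded number of layers. Membership in NP, as you say, is immediate. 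If you want to salvage your route, you would have to redesign the gadget so that all edges used in any layer respect the in-between closure property of inversion sets and then redo the budget and both correctness directions from scratch; as it stands the proposal does not constitute a proof.
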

\begin{proof}
The problem is clearly contained in NP. We present a polynomial-time reduction
from 3-SAT~\cite{GJ79} to an instance of \TS{} on a temporal permutation graph.
In 3-SAT we are given a Boolean formula in conjunctive normal form with at most
three literals in a clause and are asked whether there exists a truth assignment
that satisfies the formula.

The constructed temporal graph contains two vertices for every unique variable in the original 3-SAT instance,
exactly one of which has to be included in the separator, representing the choice of assigning a Boolean value to that variable.
It also contains three vertices for every clause in the 3-SAT instance, one vertex per literal. Of these,
exactly two per clause must be included in the separator. For each literal, there exists a temporal ($s,z$)-path that consists of
that literal's vertex and a variable vertex that is part of the separator if and only if the literal evaluates to true.
Since we can only remove two of a clause's three vertices, one of its literals needs to evaluate to true in order to avoid any
remaining temporal ($s,z$)-paths.

Let~$\var:=\{x_1,x_2,...,x_{n}\}$ be the set of variables of a given 3-SAT instance with the
Boolean formula~$(\ell_{1,1}\lor \ell_{1,2}\lor \ell_{1,3})\land ...\land
(\ell_{m,1}\lor \ell_{m,2}\lor \ell_{m,3})$.
We can construct a \TS{} instance~$\mathcal{I}:=(\tG=(V,\tE,n+5m+2),s,z,k=n+2m)$,
where~$V= [2n + 3m + 2]$.
For the sake of readability, we denote the vertices as follows:
\begin{align*}
	s &:= 1 &&\\
	t_i &:= 2i  &&\forall i \in [n]\\
	f_i &:= 2i+1 &&\forall i \in [n]\\
	c_{i,j} &:=1+2n+3(i-1)+j &&\forall i \in [m], j \in [3]\\
	z &:= 2n + 3m + 2 &&
\end{align*}
We define the time-edges as
\begin{align*}
	E_i(\tG) :={}& \{\{t_i,v\},\{f_i,w\}\mid v,w\in V, t_i< v, w< f_i\},\\
	&\text{for } i\in\NN, i\leq n,\\
	E_{n+1}(\tG) :={}& \{\{s,t_i\},\{s,f_i\}\mid i\in\NN, i\leq n\},\\
	E_{n+1+3(i-1)+j}(\tG) :={}& \{\{v,c_{i,j}\}\}\cup {}\\
	&\{\{v,w\}\mid w\in V, v< w\leq f_{n}\}\cup {} \\
	&\{\{c_{i,j},w\}\mid w\in V, c_{1,1}\leq w< c_{i,j}\},\\
	&\text{where } v=t_k \Leftrightarrow l_{i,j}=x_k \text{ and } v=f_k \Leftrightarrow l_{i,j}=\neg x_k,\\
	&\text{for } i,j\in\NN, i\leq m, j\leq 3,\\
	E_{n+3m+2}(\tG) :={}& \{\{z,c_{i,j}\}\mid i,j\in\NN, i\leq m, j\leq 3 \},\\
	E_{n+3m+2i+1}(\tG) :={}& \{\{c_{i,2},v\},\{c_{i,3},w\}\mid v,w\in V,
	c_{i,2}< v, w< c_{i,3}\},\\
	&\text{for } i\in\NN, i\leq m,\\
	E_{n+3m+2i+2}(\tG) :={}& \{\{c_{i,1},v\},\{c_{i,2},w\},\{c_{i,3},w\}\mid v,w\in V, c_{i,1}< v, w< c_{i,2}\},\\
	&\text{for } i\in\NN, i\leq m.
\end{align*}

\begin{figure}[p]
	\centering
 	\tikzstyle{alter}=[circle, minimum size=16pt, draw, inner sep=1pt]
	\tikzstyle{dot}=[circle, fill, inner sep=1.5pt]
	\begin{tikzpicture}
		\node[dot] at (0,0) (u1) {};
		\node[right = of u1] (e1) {$...$};
		\node[dot, right = of e1] (u2) {};
		\node[dot, right = of u2] (u3) {};
		\node[dot, right = of u3] (u4) {};
		\node[dot, right = of u4] (u5) {};
		\node[right = of u5] (e2) {$...$};
		\node[dot, right = of e2] (u6) {};
	
		\node[above = 1ex of u1] {$s$};
		\node[above = 1ex of u2] {$f_{i-1}$};
		\node[above = 1ex of u3] {$t_i$};
		\node[above = 1ex of u4] {$f_i$};
		\node[above = 1ex of u5] {$t_{i+1}$};
		\node[above = 1ex of u6] {$z$};
		
		
		\node[dot, below = 10ex of u1] (l1) {};
		\node[dot, right = of l1] (l2) {};
		\node[right = of l2] (e3) {$...$};
		\node[dot, right = of e3] (l3) {};
		\node[dot, right = of l3] (l4) {};
		\node[right = of l4] (e4) {$...$};
		\node[dot, right = of e4] (l5) {};
		\node[dot, right = of l5] (l6) {};
		
		\draw (u1) edge (l2);
		\draw (u2) edge (l3);
		\draw (u3) edge (l6);
		\draw (u4) edge (l1);
		\draw (u5) edge (l4);
		\draw (u6) edge (l5);
	\end{tikzpicture}
	\caption{Matching diagram for some layer~$i\in\NN, 1<i<n$, of~$\tG$ in the proof of \cref{thm:const}.}
	\label{fig:per1}
\end{figure}

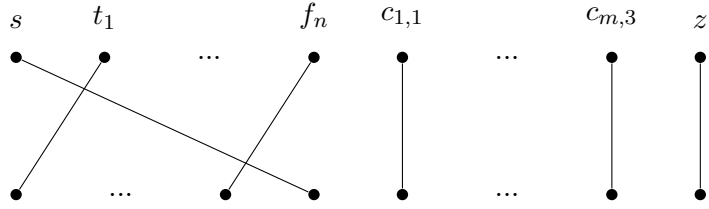
\begin{figure}[p]
	\centering
 	\tikzstyle{alter}=[circle, minimum size=16pt, draw, inner sep=1pt]
	\tikzstyle{dot}=[circle, fill, inner sep=1.5pt]
	\begin{tikzpicture}
		\node[dot] at (0,0) (u1) {};
		\node[dot, right = of u1] (u2) {};
		\node[right = of u2] (e1) {$...$};
		\node[dot, right = of e1] (u3) {};
		\node[dot, right = of u3] (u4) {};
		\node[right = of u4] (e2) {$...$};
		\node[dot, right = of e2] (u5) {};
		\node[dot, right = of u5] (u6) {};
	
		\node[above = 1ex of u1] {$s$};
		\node[above = 1ex of u2] {$t_1$};
		\node[above = 1ex of u3] {$f_{n}$};
		\node[above = 1ex of u4] {$c_{1,1}$};
		\node[above = 1ex of u5] {$c_{m,3}$};
		\node[above = 1ex of u6] {$z$};
		
		\node[dot, below = 10ex of u1] (l1) {};
		\node[right = of l1] (e3) {$...$};
		\node[dot, right = of e3] (l2) {};
		\node[dot, right = of l2] (l3) {};
		\node[dot, right = of l3] (l4) {};
		\node[right = of l4] (e4) {$...$};
		\node[dot, right = of e4] (l5) {};
		\node[dot, right = of l5] (l6) {};
		
		\draw (u1) edge (l3);
		\draw (u2) edge (l1);
		\draw (u3) edge (l2);
		\draw (u4) edge (l4);
		\draw (u5) edge (l5);
		\draw (u6) edge (l6);
	\end{tikzpicture}
	\caption{Matching diagram for layer~$n+1$ of~$\tG$ in the proof of \cref{thm:const}.}
	\label{fig:per2}
\end{figure}

\begin{figure}[p]
	\centering
 	\tikzstyle{alter}=[circle, minimum size=16pt, draw, inner sep=1pt]
	\tikzstyle{dot}=[circle, fill, inner sep=1.5pt]
	\begin{tikzpicture}
		\node[dot] at (0,0) (u1) {};
		\node[right = 1ex of u1] (e1) {$...$};
		\node[dot, right = 1ex of e1] (u2) {};
		\node[dot, right = of u2] (u3) {};
		\node[dot, right = of u3] (u4) {};
		\node[right = 1ex of u4] (e2) {$...$};
		\node[dot, right = 1ex of e2] (u5) {};
		\node[dot, right = of u5] (u6) {};
		\node[right = 1ex of u6] (e3) {$...$};
		\node[dot, right = 1ex of e3] (u7) {};
		\node[dot, right = of u7] (u8) {};
		\node[dot, right = of u8] (u9) {};
		\node[right = 1ex of u9] (e4) {$...$};
		\node[dot, right = 1ex of e4] (u10) {};
	
		\node[above = 1ex of u1] {$s$};
		\node[above = 1ex of u2] {$f_{k-1}$};
		\node[above = 1ex of u3] {$t_k$};
		\node[above = 1ex of u4] {$f_k$};
		\node[above = 1ex of u5] {$f_{n}$};
		\node[above = 1ex of u6] {$c_{1,1}$};
		\node[above = 1ex of u7] {$c_{i,1}$};
		\node[above = 1ex of u8] {$c_{i,2}$};
		\node[above = 1ex of u9] {$c_{i,3}$};
		\node[above = 1ex of u10] {$z$};
		
		\node[dot, below = 10ex of u1] (l1) {};
		\node[right = 1ex of l1] (e5) {$...$};
		\node[dot, right = 1ex of e5] (l2) {};
		\node[dot, right = of l2] (l3) {};
		\node[right = 1ex of l3] (e6) {$...$};
		\node[dot, right = 1ex of e6] (l4) {};
		\node[dot, right = of l4] (l5) {};
		\node[dot, right = of l5] (l6) {};
		\node[dot, right = of l6] (l7) {};
		\node[right = 1ex of l7] (e7) {$...$};
		\node[dot, right = 1ex of e7] (l8) {};
		\node[dot, right = of l8] (l9) {};
		\node[right = 1ex of l9] (e8) {$...$};
		\node[dot, right = 1ex of e8] (l10) {};
		
		\draw (u1) edge (l1);
		\draw (u2) edge (l2);
		\draw (u3) edge (l6);
		\draw (u4) edge (l3);
		\draw (u5) edge (l4);
		\draw (u6) edge (l7);
		\draw (u7) edge (l8);
		\draw (u8) edge (l5);
		\draw (u9) edge (l9);
		\draw (u10) edge (l10);
	\end{tikzpicture}
	\caption{Matching diagram for layer~$n+1+3(i-1)+2, i\in\NN, 1<i<m$, of~$\tG$ in the proof of \cref{thm:const}, assuming that~$l_{i,2}=x_k$.}
	\label{fig:per3}
\end{figure}
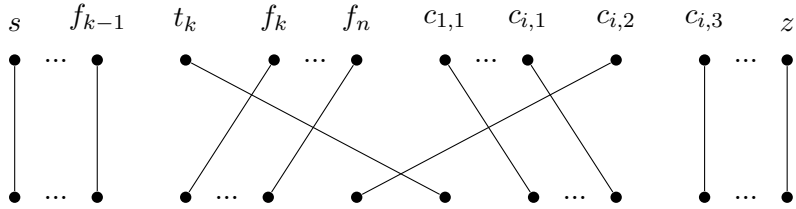

\begin{figure}[p]
	\centering
 	\tikzstyle{alter}=[circle, minimum size=16pt, draw, inner sep=1pt]
	\tikzstyle{dot}=[circle, fill, inner sep=1.5pt]
	\begin{tikzpicture}
		\node[dot] at (0,0) (u1) {};
		\node[dot, right = of u1] (u2) {};
		\node[right = of u2] (e1) {$...$};
		\node[dot, right = of e1] (u3) {};
		\node[dot, right = of u3] (u4) {};
		\node[right = of u4] (e2) {$...$};
		\node[dot, right = of e2] (u5) {};
		\node[dot, right = of u5] (u6) {};
	
		\node[above = 1ex of u1] {$s$};
		\node[above = 1ex of u2] {$t_1$};
		\node[above = 1ex of u3] {$f_{n}$};
		\node[above = 1ex of u4] {$c_{1,1}$};
		\node[above = 1ex of u5] {$c_{m,3}$};
		\node[above = 1ex of u6] {$z$};
		
		\node[dot, below = 10ex of u1] (l1) {};
		\node[dot, right = of l1] (l2) {};
		\node[right = of l2] (e3) {$...$};
		\node[dot, right = of e3] (l3) {};
		\node[dot, right = of l3] (l4) {};
		\node[dot, right = of l4] (l5) {};
		\node[right = of l5] (e4) {$...$};
		\node[dot, right = of e4] (l6) {};
		
		\draw (u1) edge (l1);
		\draw (u2) edge (l2);
		\draw (u3) edge (l3);
		\draw (u4) edge (l5);
		\draw (u5) edge (l6);
		\draw (u6) edge (l4);
	\end{tikzpicture}
	\caption{Matching diagram for layer~$n+3m+2$ of~$\tG$ in the proof of \cref{thm:const}.}
	\label{fig:per4}
\end{figure}

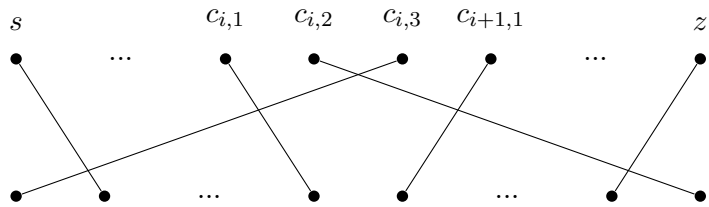
\begin{figure}[p]
	\centering
 	\tikzstyle{alter}=[circle, minimum size=16pt, draw, inner sep=1pt]
	\tikzstyle{dot}=[circle, fill, inner sep=1.5pt]
	\begin{tikzpicture}
		\node[dot] at (0,0) (u1) {};
		\node[right = of u1] (e1) {$...$};
		\node[dot, right = of e1] (u2) {};
		\node[dot, right = of u2] (u3) {};
		\node[dot, right = of u3] (u4) {};
		\node[dot, right = of u4] (u5) {};
		\node[right = of u5] (e2) {$...$};
		\node[dot, right = of e2] (u6) {};
	
		\node[above = 1ex of u1] {$s$};
		\node[above = 1ex of u2] {$c_{i,1}$};
		\node[above = 1ex of u3] {$c_{i,2}$};
		\node[above = 1ex of u4] {$c_{i,3}$};
		\node[above = 1ex of u5] {$c_{i+1,1}$};
		\node[above = 1ex of u6] {$z$};
		
		\node[dot, below = 10ex of u1] (l1) {};
		\node[dot, right = of l1] (l2) {};
		\node[right = of l2] (e3) {$...$};
		\node[dot, right = of e3] (l3) {};
		\node[dot, right = of l3] (l4) {};
		\node[right = of l4] (e4) {$...$};
		\node[dot, right = of e4] (l5) {};
		\node[dot, right = of l5] (l6) {};
		
		\draw (u1) edge (l2);
		\draw (u2) edge (l3);
		\draw (u3) edge (l6);
		\draw (u4) edge (l1);
		\draw (u5) edge (l4);
		\draw (u6) edge (l5);
	\end{tikzpicture}
	\caption{Matching diagram for some layer~$n+3m+2i+1, i\in\NN, 1<i<m$, of~$\tG$ in the proof of \cref{thm:const}.}
	\label{fig:per5}
\end{figure}

\begin{figure}
	\centering
 	\tikzstyle{alter}=[circle, minimum size=16pt, draw, inner sep=1pt]
	\tikzstyle{dot}=[circle, fill, inner sep=1.5pt]
	\begin{tikzpicture}
		\node[dot] at (0,0) (u1) {};
		\node[right = of u1] (e1) {$...$};
		\node[dot, right = of e1] (u2) {};
		\node[dot, right = of u2] (u3) {};
		\node[dot, right = of u3] (u4) {};
		\node[dot, right = of u4] (u5) {};
		\node[dot, right = of u5] (u6) {};
		\node[right = of u6] (e2) {$...$};
		\node[dot, right = of e2] (u7) {};
	
		\node[above = 1ex of u1] {$s$};
		\node[above = 1ex of u2] {$c_{i-1,3}$};
		\node[above = 1ex of u3] {$c_{i,1}$};
		\node[above = 1ex of u4] {$c_{i,2}$};
		\node[above = 1ex of u5] {$c_{i,3}$};
		\node[above = 1ex of u6] {$c_{i+1,1}$};
		\node[above = 1ex of u7] {$z$};
		
		\node[dot, below = 10ex of u1] (l1) {};
		\node[dot, right = of l1] (l2) {};
		\node[dot, right = of l2] (l3) {};
		\node[right = of l3] (e3) {$...$};
		\node[dot, right = of e3] (l4) {};
		\node[dot, right = of l4] (l5) {};
		\node[right = of l5] (e4) {$...$};
		\node[dot, right = of e4] (l6) {};
		\node[dot, right = of l6] (l7) {};
		
		\draw (u1) edge (l3);
		\draw (u2) edge (l4);
		\draw (u3) edge (l7);
		\draw (u4) edge (l1);
		\draw (u5) edge (l2);
		\draw (u6) edge (l5);
		\draw (u7) edge (l6);
	\end{tikzpicture}
	\caption{Matching diagram for some layer~$n+3m+2i+2, i\in\NN, 1<i<m$, of~$\tG$ in the proof of \cref{thm:const}.}
	\label{fig:per6}
\end{figure}
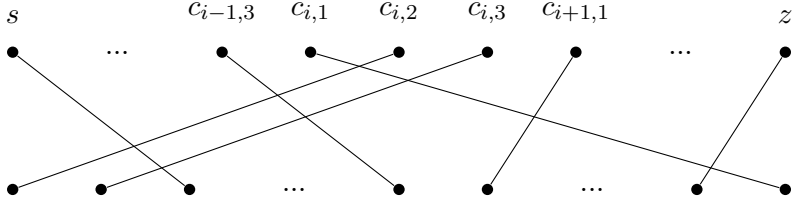

We first show that we can give a
permutation for every layer that produces the corresponding
edge set.

In layers~1 through~$n$
and all layers~$n+3m+2i+1, i\in\NN$, there is one vertex
with edges to all smaller vertices and one with edges
to all larger vertices. The corresponding permutation maps these vertices to the
first respectively last position while keeping the relative order of all other vertices unchanged.
If said two vertices are~$i$ and~$i+1$, then the corresponding permutation would be~$(2,3,...,i,|V|,1,i+1,...,|V|-1)$.
See also \cref{fig:per1} and \cref{fig:per5}.

In layer~$n+1$, vertex~$s$ has edges to the set~$\{2, 3, \dots, f_n\}$,
yielding the permutation~$(f_n,1,2,...,$ $f_n-1,f_n+1,...,|V|)$.
For layer~$n+3m+2$, the situation is similar.
See also \cref{fig:per2} and \cref{fig:per4}.

In each of the layers $n+2$~through~$n+3m+1$ there is a vertex~$v\leq 2n+1$ and a vertex~$c>2n+1$.
Each of these two vertices has an edge to every other vertex lying between themselves and $f_n + 1/2$.
This corresponds to the permutation~$(1,2,...,v-1,f_n+1,v,...,f_n-1,f_n+2,...,c,f_n,c+1,...,|V|)$.
See \cref{fig:per3}.

In all layers~$n+3m+2i+2, i\in\NN$, there are two consecutive vertices with
edges to all vertices below them but no edge to each
other, and one vertex with edges to all vertices above it. This
corresponds to the positions of the former two vertices having the first two
positions in the permutation and the latter vertex having the last position in
the permutation while the order of all other elements remains unchanged. For example, for 
layer~$n+3m+2i+2$, this yields the
permutation~$(3,4,...,c_{i,2},|V|,1,2,c_{i,3},...,|V|-1)$. See \cref{fig:per6}.

\paragraph*{Correctness.}
We prove that the Boolean formula is satisfiable if and only if~$\mathcal{I}$ is a yes-instance.

$\Leftarrow$: For every~$i\in\NN, i\leq n$, there exists an ($s,z$)-path through~$t_i$ and~$f_i$ in layer~$i$, meaning that one
of these two vertices has to be included in any temporal ($s,z$)-separator. For every~$i\in\NN, i\leq m$,
there is an ($s,z$)-path through~$c_{i,2}$ and~$c_{i,3}$ in layer~$n+3m+2i+1$ and two paths in layer~$n+3m+2i+2$,
first going through~$c_{i,1}$ and then either~$c_{i,2}$ or~$c_{i,3}$. To remove all of these paths any two of the three vertices
have to be in any temporal ($s,z$)-separator.

To satisfy all of these requirements, a temporal ($s,z$)-separator needs to contain
at least~$n+2m$ vertices. Since this equals~$k$, a separator that is a solution to~$\mathcal{I}$
cannot contain any more vertices than required above.
Specifically, for each~$i\in\NN, i\leq n$, only one of the two vertices~$t_i$ and~$f_i$ can be in the separator, and
for every clause~$(\ell_{i,1}\lor \ell_{i,2}\lor \ell_{i,3})$ in the Boolean formula,
there exists one vertex~$c_{i,j}, j\in\{1,2,3\}$ that is not in the separator. If~$\ell_{i,j}$ is of the form~$x_k$, then setting~$x_k$ to true satisfies the clause and
there exists an ($s,z$)-path through~$t_k$ and~$c_{i,j}$ spanning layer~$n+1$, layer~$n+1+3(i-1)+j$, and
layer~$n+3m+2$, meaning that~$t_k$ must be in the separator.
If~$\ell_{i,j}$ is of the form~$\neg x_k$, then setting~$x_k$ to false satisfies the clause and
there exists an ($s,z$)-path through~$f_k$ and~$c_{i,j}$ in those three layers, meaning that~$f_k$ must be in the separator.
Hence, if we set every variable~$v_k$ in~$\var$ to true if the separator contains~$t_k$, and to false if the separator
contains~$f_k$, all clauses, and therefore the Boolean formula as a whole, are satisfied.

$\Rightarrow$: Given an interpretation of~$\var$ that satisfies the Boolean formula, we can construct a temporal ($s,z$)-separator~$S$
of size~$k$ for~$\tG$. For~$i\in\NN, i\leq n$, if the interpretation sets~$x_i$ to true, then we include~$t_i$ in the separator,
if the interpretation set~$x_i$ to false, then we include~$f_i$ in the separator.
Since the interpretation satisfies the Boolean formula, in every
clause~$(\ell_{i,1}\lor \ell_{i,2}\lor \ell_{i,3})$ in the formula at least one literal has to evaluate to true. We choose~$\ell_{i,j}$ to be such a literal and then include the vertices
in~$\{c_{i,k}\in V\mid k\neq j\}$ in the separator. After doing this for each clause, we now have a temporal ($s,z$)-separator
of size~$k$.

In the following we will refer to the vertex set~$\{v\in V\mid v \leq f_{n}\}$ as~$V_v$,
and to~$\{v\in V\mid c_{1,1} \leq v\}$ as~$V_c$.
In each layer~$i$ for~$i\in\NN, i\leq n$, only vertices up to~$f_i$ are reachable from~$s$ without having to
pass through both~$t_i$ and~$f_i$, so by layer~$n$, only vertices in~$V_v$ can be reached. In layer~$n+1$,
there are no edges leading to~$V_c$.

In each layer~$n+1+3(i-1)+j$ for~$i,j\in\NN, i\leq m, j\leq 3$, the only edge
connecting~$V_v$ and~$V_c$ is~$\{v, c_{i,j}\}$ where~$v=t_k\Leftrightarrow \ell_{i,j}=x_k$
and~$v=f_k\Leftrightarrow \ell_{i,j}=\neg x_k$, with~$k\in\NN, k\leq n$. We look at two cases:
\begin{itemize}
	\item If~$c_{i,j}$ is in~$S$, then~$V_c$ cannot be reached in this layer.
	\item If we did not include~$c_{i,j}$ in~$S$, then~$\ell_{i,j}$ must evaluate to true, now we distinguish two cases again:
	\begin{itemize}
		\item If~$\ell_{i,j}=x_k$, then~$x_k$ must be set to true and~$v=t_k\in S$.
		\item If~$\ell_{i,j}=\neg x_k$, then~$x_k$ must be set to false and~$v=f_k\in S$.
	\end{itemize}
\end{itemize}
Therefore, in none of these layers any vertex in~$V_c$ is reachable from~$V_v$.

In layer~$n+3m+2$, there are no edges leading to~$V_c$.
In the layers~$n+3m+2i+1$ and~$n+3m+2i+2$ for each~$i\in\NN, i\leq m$, no
vertices greater than~$c_{i,3}$ can be reached
from~$V_v$ without passing through at least two of the vertices~$c_{i,1},c_{i,2}$, and~$c_{i,3}$. Since only one of these was not included in~$S$, this is not possible.

Hence, by layer~$n+5m+2$, which is the last layer, no vertex greater than~$2n+3n+1$ can be reached from~$s$.
Since~$z=\abs{V}=2n+3m+2$, there remains no temporal ($s,z$)-path
and~$S$ is indeed a temporal ($s,z$)-separator.\qed
\end{proof}

We remark that the reduction we use to obtain \cref{thm:const} uses a unbounded number of time steps and also the maximum Kendall tau distance between any two consecutive permutations is unbounded. However, by introducing additional layers, one can decrease the Kendall tau distance between any two consecutive layers to one. The main idea is to gradually change a layer to an independent set and then gradually to the next layer. This can be done in a way that does not introduce any new temporal paths.

\subsection{Fixed-Parameter Tractability Results}\label{chap:perm:swap}

In this section, we examine the effect of limiting how much the permutations
of the layers of the temporal permutation graph change. We can do so by
measuring the Kendall tau distance between the permutations of consecutive layers.

We will show that \TS{} on temporal permutation graphs is \fpt{} with respect to the sum of Kendall tau distances between all pairs of consecutive permutations plus separator size~$k$. For this we need to demonstrate that these parameters limit the number of
($s,z$)-separators that a layer does not have in common with another layer.

First, we introduce the concept of \emph{scanlines}.
A scanline is any line segment in the matching diagram of a permutation with one end on each horizontal line.
If~$s$ lies on one side of the scanline and~$z$ on the other, then the set of all vertices whose line segments
cross the scanline is an ($s,z$)-separator. We call such a separator a \emph{scanline separator}.
\citet{BKK95} have shown that every minimal separator in a permutation graph is
a scanline separator and that there are at most $(n-1)(2k+3)$~scanline separators of size at most~$k$ \cite[Proof of Lemma~3.6]{BKK95}, where $n$ is the number of vertices.

\begin{lemma} \label{thm:dk_scan}
Let $G_1=([n],E_1)$ and~$G_2=([n],E_2)$ be two permutation graphs.
If the two corresponding permutations have Kendall tau
distance~$d$, then the number of scanline separators of size at
most~$k$ in~$G_2$ that are not also scanline separators
in~$G_1$ is at most~$d\cdot(2k+1)$.
\end{lemma}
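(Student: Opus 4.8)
\textbf{Proof plan for \cref{thm:dk_scan}.}
The plan is to track how scanline separators change as we transform the permutation of $G_1$ into that of $G_2$ one adjacent transposition (``swap'') at a time. Since the Kendall tau distance is $d$, there is a sequence of permutations $\pi_1 = \sigma_0, \sigma_1, \dots, \sigma_d = \pi_2$ where each $\sigma_{i}$ is obtained from $\sigma_{i-1}$ by swapping two vertices that are consecutive in the current vertex ordering along the bottom line of the matching diagram. (Equivalently, one swaps two elements adjacent in position.) I will argue that each such single swap can create at most $(2k+1)$ scanline separators of size at most $k$ that were not already scanline separators before the swap; summing over the $d$ swaps then gives the bound $d \cdot (2k+1)$.

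The key observation is to understand what a single adjacent swap does to the matching diagram and hence to the set of scanline separators. Swapping two vertices $a, b$ that occupy consecutive positions (say positions $j$ and $j+1$ on the bottom line) only changes the picture in the thin vertical strip between those two positions: the two line segments for $a$ and $b$ exchange their lower endpoints, so the only pair whose crossing status changes is $\{a,b\}$ itself. Consequently, a scanline in the new diagram is ``new'' (i.e., not a scanline separator of the old diagram, up to the combinatorial equivalence of scanlines that cross the same set of segments) only if its bottom endpoint lies strictly between positions $j$ and $j+1$ — otherwise it crosses exactly the same set of segments in both diagrams and defines the same separator. So I would localize attention to scanlines whose lower end falls in that single gap. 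For such a scanline, the separator it induces can differ from the corresponding ``old'' separator only by the symmetric difference within $\{a, b\}$. Here I will invoke the result of \citet{BKK95} quoted just before the lemma: in a fixed permutation graph on $n$ vertices there are at most $(n-1)(2k+3)$ scanline separators of size at most $k$, but more usefully, for a fixed choice of bottom-gap and the (at most two) top positions consistent with a separator of size $\le k$, the number of combinatorially distinct scanlines through that gap that yield a separator of size at most $k$ is $O(k)$ — one sweeps the top endpoint and each unit of movement changes the separator by one vertex, so only $2k+1$ windows can have size $\le k$. This yields the per-swap bound of $2k+1$.

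Carrying this out concretely: fix a swap at positions $j, j+1$. A scanline whose bottom endpoint is in the gap between positions $j$ and $j+1$ is determined (combinatorially) by which gap between consecutive \emph{top} endpoints its upper end lies in. As the top endpoint moves from left to right through the $n+1$ top gaps, the set of crossed segments changes by exactly one vertex at each step, so the sizes of the induced separators form a sequence of integers that changes by $\pm 1$ at each step; hence at most $2k+1$ of them are $\le k$. Each of these at most $2k+1$ scanlines, when the bottom endpoint is instead placed in a gap \emph{other} than $(j,j+1)$, would have crossed the same segments in $G_1$ and so already been a scanline separator of $G_1$ — unless the value $\{a,b\}$ actually matters, which is exactly the case we are counting. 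So at most $2k+1$ genuinely new scanline separators of size at most $k$ arise per swap. Summing over the $d$ swaps in the transformation $\pi_1 \to \pi_2$ gives at most $d(2k+1)$ scanline separators of size $\le k$ in $G_2$ that are not scanline separators in $G_1$, as claimed.

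\textbf{Main obstacle.} The delicate point is the bookkeeping around the notion of ``the same scanline separator'': two geometrically different scanlines can induce the same vertex set, and a scanline with bottom endpoint outside the swapped gap truly induces identical separators in $G_1$ and $G_2$. I need to argue cleanly that every new scanline separator of $G_2$ is witnessed by some scanline whose bottom endpoint lies in the single swapped gap (choosing, among all scanlines realizing that separator, one that does), and then that the swept-top-endpoint argument counts combinatorial types without overcounting. A careful statement of ``a scanline is determined up to crossing-equivalence by its (top gap, bottom gap) pair'' and the monotone $\pm1$ change of separator size under moving an endpoint is what makes the $2k+1$ factor rigorous; everything else is the routine geometry of matching diagrams.
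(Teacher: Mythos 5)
Your overall strategy is the same as the paper's: decompose the Kendall tau distance into $d$ adjacent transpositions, observe that a single swap changes the crossing set only of scanlines whose lower end lies in the swapped gap, bound the number of size-at-most-$k$ scanline separators attached to any fixed bottom gap by $2k+1$ via sweeping the top endpoint, and multiply. The localization step and the telescoping over the intermediate permutations $\sigma_0,\dots,\sigma_d$ are fine (any scanline separator of $G_2$ that is not one of $G_1$ must first appear after some swap, and then every scanline realizing it must pass through that swap's bottom gap, since a scanline at any other bottom gap crosses identical segment sets before and after the swap).

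The genuine gap is in your justification of the per-gap bound of $2k+1$. You argue that moving the top endpoint changes the crossed set by exactly one vertex per step, so the sizes form a $\pm 1$ sequence, ``hence at most $2k+1$ of them are $\le k$.'' That inference is invalid on its own: a $\pm 1$ sequence can oscillate, e.g.\ $k, k-1, k, k-1, \dots$, and since every step changes the set (and, in fact, all sets along the sweep are pairwise distinct), this would yield far more than $2k+1$ distinct separators of size $\le k$. What excludes such oscillation --- and what the paper's proof explicitly uses --- is that during the sweep each segment is toggled at most once: a segment whose lower end lies to the right of the fixed bottom gap is added when its top point is passed and is never removed afterwards, while a segment whose lower end lies to the left can only be removed. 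Hence, starting from the leftmost top position yielding a separator of size $\le k$, at most $k$ removals can ever follow, so after more than $2k$ further steps more than $k$ permanent additions have accumulated and the size exceeds $k$; this is precisely what gives $2k+1$ per bottom gap. Your phrase ``monotone $\pm 1$ change'' does not capture this --- the sizes are not monotone; it is the set of added vertices that grows monotonically. Also note that falling back on the count of \citet{BKK95} as quoted would only give $2k+3$ per gap, i.e.\ the weaker bound $d(2k+3)$ rather than the claimed $d(2k+1)$.
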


\begin{proof}
We begin by showing that for every endpoint on the bottom line, there are at most~$2k+1$ separators of size at most~$k$
defined by a scanline with that endpoint. For a given bottom endpoint, we look at the leftmost endpoint on the top line that
defines such a separator. Moving right, every time we pass an element of the permutation, its line is added to those crossed by
the scanline if it was completely to the right of it before, and removed from them if it crossed the scanline before.
This corresponds to adding or removing one vertex from the defined separator.

Since our scanline can pass every element only once,
the vertices that were added in this process will not be removed again.
If we have already passed more than $2k$ points on the top line,
then more than~$k$ of these represent vertices which were not present in the initial separator and were thus added.
Therefore the resulting separator contains more than~$k$ vertices.
Therefore, all separators of size at most~$k$ must be produced in the first $2k$ steps of this process.
Together with the separator defined by the
initial scanline, this gives us at most~$2k+1$ scanline separators as claimed.

Upon moving from~$G_1$ to~$G_2$,
the set of lines crossing any given scanline only changes,
if any of the $d$~swaps swapped the two points immediately to the left and right of the lower end of the scanline.
By the above, this means at most $d \cdot (2k+1)$ many scanlines produce different separators in $G_1$ and $G_2$.
Hence, the number of new
scanline separators in~$G_2$ does not exceed~$d\cdot(2k+1)$.\qed
\end{proof}

Now we show that reachability in temporal permutation graphs follows the vertex order:
If a vertex reaches another vertex, then it also reaches all vertices in between.

\begin{lemma}
\label{thm:order}
Let~$G=([n],E)$ be a permutation graph 
and $v,w,x\in [n]$ three vertices with $v< w< x$.
If there exists a path from~$v$ to~$x$,
then there also exist paths from~$w$ to both, $v$~and~$x$.
\end{lemma}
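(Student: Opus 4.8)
The plan is to work with the matching diagram of $G$ and translate everything into statements about line segments and crossings. Recall that in the matching diagram vertex $i$ is the segment joining the $i$-th point on the top line to the $\pi(i)$-th point on the bottom line, and two vertices are adjacent iff their segments cross. Since $v<w<x$, the segment of $w$ has its top endpoint strictly between the top endpoints of $v$ and $x$. The key geometric fact I would isolate is the following planarity-type observation: if a segment $a$ crosses a segment $c$, and the top endpoint of $b$ lies between the top endpoints of $a$ and $c$, then $b$ must cross at least one of $a$ or $c$ (because the segment of $b$ separates — its top endpoint is ``inside'' the crossing region, so its bottom endpoint is either inside, forcing a crossing with both, or outside, forcing a crossing with exactly one). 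This is really just the observation that three pairwise ``order-nested'' segments cannot be mutually disjoint when the middle one starts between the other two.

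Using this, here is how I would carry out the argument. Let $P = (v = u_0, u_1, \dots, u_\ell = x)$ be a path from $v$ to $x$; so consecutive $u_j, u_{j+1}$ are adjacent, i.e. their segments cross. I want to produce a path from $w$ to one of the endpoints $v$ or $x$; since $G$ is undirected this suffices, and by symmetry (applying the result to the reversed path) I then get a path to the other endpoint as well — actually I should be slightly careful here and argue directly for both, so let me instead argue: it is enough to show $w$ is connected to $v$ in $G$, and then connected to $x$ in $G$, and connectivity is what ``there exists a path'' means. For connectivity to $v$: walk along $P$ from $u_0 = v$; consider the first index $j$ such that the top endpoint of $u_j$ is $\ge$ (in position) that of $w$, or such that $u_j$ crosses $w$. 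Before reaching such a $j$, all $u_i$ have top endpoint left of $w$'s. At the step from $u_{j-1}$ to $u_j$ the segment $u_{j-1}$ has top endpoint left of $w$ and crosses $u_j$; if the top endpoint of $u_j$ is to the right of $w$'s, then $w$'s top endpoint lies between those of $u_{j-1}$ and $u_j$, so by the geometric observation $w$ crosses $u_{j-1}$ or $u_j$, giving an edge from $w$ into $P$ and hence a path from $w$ to $v$ along the prefix of $P$. If instead no such crossing ever happens before $u_\ell = x$, then since $x$'s top endpoint is to the right of $w$'s, the same argument applies at the last step, or in the degenerate case $w = x$ there is nothing to prove. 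The argument for reaching $x$ is symmetric, scanning $P$ from the $x$ end, or equivalently using that $w$'s top endpoint lies strictly left of $x$'s and strictly right of $v$'s so the ``middle segment'' argument applies from either side.

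The main obstacle I expect is making the geometric case analysis clean and exhaustive rather than hand-wavy: precisely, given that $u_{j-1}$ and $u_j$ cross and $w$'s top endpoint is strictly between theirs, one must check all placements of $w$'s bottom endpoint relative to the bottom endpoints of $u_{j-1}$ and $u_j$ and confirm $w$ crosses at least one of them; and one must handle the boundary situations where $w$ coincides in position with an endpoint of some $u_i$ (this cannot literally happen since all points are distinct, but the ``first index'' bookkeeping needs the strict inequalities $v<w<x$ to guarantee such a $j$ exists). Everything else is routine: rephrasing ``path'' as a walk of crossing segments, and invoking that connectivity is symmetric so that a path $w \to v$ plus a path $w \to x$ is exactly what the statement asks for. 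I would also note that \cref{thm:order} will presumably be used together with \cref{thm:dk_scan} and the scanline-separator machinery of \citet{BKK95} in the subsequent FPT argument, but the lemma itself needs nothing beyond the matching-diagram description of $G$.
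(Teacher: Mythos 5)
Your proposal is correct and rests on the same key observation as the paper's proof: for $a<b<c$ with $a$ and $c$ adjacent, the vertex $b$ must be adjacent to $a$ or $c$ (you phrase it via crossings in the matching diagram, the paper via the permutation inequalities, which is the same fact), followed by locating the step of the path that jumps over $w$. Your ``first index'' scan is just an unrolled version of the paper's induction on path length, so the approaches are essentially identical; your only slip is the remark that $w$ coinciding with a path vertex ``cannot happen''---it can, but that case is trivial since a prefix and suffix of the path then serve as the required paths.
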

\begin{proof}
Let $\pi$ be the corresponding permutation.
Since~$v$ and~$x$ are connected, it suffices to show that there is a path from one of them to~$w$.
We first look at the case that there is an edge between~$v$ and~$x$. This means
that~$\pi(v)>\pi(x)$.
If~$w$ does not have an edge to either~$v$ or~$x$, then neither the
positions of~$v$ and~$w$ nor~$w$ and~$x$ get reversed by $\pi$,
so~$\pi(v)<\pi(w)<\pi(x)\Rightarrow\pi(v)<\pi(x)$,
a contradiction. Therefore, either~$v$ or~$x$ must be connected to~$v$ via a single edge.

Now we use induction to show that the lemma holds for a ($v,x$)-path of any length. A ($v,x$)-path of length~1 is the same as
an edge between the two vertices. The induction hypothesis is that if there exists a ($v,x$)-path of length~$\ell\in\NN$, then there exist paths
to~$w$ from~$v$ and~$x$.

We show that if there exists a ($v,x$)-path~$p$ of length~$\ell+1$, then there exist paths to~$w$ from~$v$ and~$x$.
Let~$v'$ be the element of~$p$ that follows~$v$. If~$w< v'$, then, since~$v< w< v'$ and there exists an edge
between~$v$ and~$v'$, there is either an edge between~$v$ and~$w$, or an edge between~$v'$ and~$w$, forming a path from~$v$ to~$w$.
If~$v'< w$ then, since~$v'< w< x$ and there exists a ($v',x$)-path of length~$\ell$, due to the induction hypothesis, there
exists a path from~$x$ to~$w$.\qed
\end{proof}

We now present a parameterized algorithm for \TS{} on a temporal permutation graph~$\tG$.
For this, we introduce the parameter $d_\Sigma := \sum_{t=1}^{\tau -1} d_\textnormal{Kt}(\pi_t, \pi_{t+1})$,
where $d_\textnormal{Kt}$ denotes the Kendall tau distance and $\pi_t$ is the $t$-th permutation of $\tG$. Note that taking the maximum instead of the sum does not provide a helpful parameter since the hardness reduction we used to obtain \cref{thm:const} can be modified such that the Kendall tau distance of any two consecutive layers is one.

\begin{theorem}
For a temporal permutation graph, \TS{} can be solved
in~$\bigO((d_\Sigma (2k+1))^k  n\cdot \abs{\tE} + \tau n^2)$ time.
\end{theorem}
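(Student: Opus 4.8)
The plan is to build a dynamic-programming / BFS-style reachability computation over the layers, where the "state" carried from one layer to the next is a small collection of candidate partial separators. The key structural facts to exploit are: (i) by the result of \citet{BKK95}, every minimal separator in a permutation graph is a scanline separator, and a single layer has at most $(n-1)(2k+3)$ scanline separators of size at most $k$; (ii) by \cref{thm:dk_scan}, moving from layer $t$ to layer $t+1$ introduces at most $d_{\textnormal{Kt}}(\pi_t,\pi_{t+1})\cdot(2k+1)$ \emph{new} scanline separators; and (iii) by \cref{thm:order}, within any single permutation graph, reachability respects the vertex order, so the set of vertices reachable from $s$ in a layer is an interval-like object that can be summarized by how far "past" the relevant separating scanline $s$ can still reach.

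First I would set up the following branching. A temporal $(s,z)$-separator $S$ with $|S|\le k$, when restricted to layer $t$, must (together with the vertices already deleted) block the $s$-side from the $z$-side \emph{among the vertices that are still reachable at time $t$}; the natural certificate for this is a scanline in layer $t$ separating $s$ from $z$ whose crossing set lies in $S$. The crucial observation is that as $t$ increases, the scanline we use can only "move" when the underlying permutation changes, i.e.\ at one of the $d_\Sigma$ swaps; between swaps the same scanline keeps working. So a solution is described by choosing, for each of the $d_\Sigma$ "swap events," which new scanline separator (if any) to switch to — and each such choice costs at least one vertex toward the budget $k$, or else is free because the old scanline still separates. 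Since each switch either consumes budget or is forced to one of at most $d_\Sigma(2k+1)$ new scanline separators (by \cref{thm:dk_scan}) and we can make at most $k$ budget-consuming choices, the number of candidate separator-sequences to examine is $\bigO((d_\Sigma(2k+1))^k)$.

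Concretely, I would: (1) in $\bigO(\tau n^2)$ time precompute all $\tau$ permutations, all pairwise-consecutive swap sequences witnessing $d_\textnormal{Kt}$, and, for each layer, the scanline separators of size $\le k$ (using the argument in the proof of \cref{thm:dk_scan}, only $\bigO(n\cdot k)$ per layer, or just the $\bigO(nk)$ relevant ones near each swap position); (2) enumerate the $\bigO((d_\Sigma(2k+1))^k)$ candidate ways to distribute the budget among switch events and pick a new scanline at each budget-consuming switch; (3) for each candidate, take $S$ to be the union of the crossing sets of the chosen scanlines, check $|S|\le k$, and verify in $\bigO(n\cdot|\tE|)$ time (via a temporal BFS from $s$, using \cref{thm:order} to keep the reachable frontier as an interval so each layer costs $\bigO(|E_t|+n)$) that $\tG - S$ has no temporal $(s,z)$-path. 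Multiplying the number of candidates by the per-candidate verification time $\bigO(n\cdot|\tE|)$ and adding the preprocessing gives the claimed $\bigO((d_\Sigma(2k+1))^k\, n\cdot|\tE| + \tau n^2)$ bound. Correctness in one direction is immediate (any candidate that passes verification is a valid separator of size $\le k$); the other direction requires arguing that an \emph{optimal} separator can be assumed to consist, layer-by-layer, of scanline separators that change only at swap events — this follows from minimality plus \cref{thm:order}, because within a run of layers sharing the same permutation a minimal separator must restrict to a minimal permutation-graph separator, hence (by \citet{BKK95}) a scanline separator, and a single scanline can be chosen to work across the whole run.

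The main obstacle I expect is the completeness argument in step (3): showing that we never need more than $d_\Sigma$ "switches" and that each switch can be charged either to budget or to one of the new scanlines from \cref{thm:dk_scan}. The subtlety is that the reachable set from $s$ genuinely shrinks over time (vertices get separated off permanently), so the scanline that "should" be used in a later layer need not be a scanline separator of the \emph{whole} graph but only of the part still reachable; I would handle this by formalizing, via \cref{thm:order}, a monotone "reach frontier" $r_t$ (the largest vertex $s$ can still reach at the start of layer $t$) and arguing that the relevant scanline in layer $t$ only needs to separate $s$ from $z$ within $[1,r_t]$, and that this weaker requirement is still met by the previous layer's scanline unless a swap occurred at the frontier — at which point \cref{thm:dk_scan}'s bound of $2k+1$ new options kicks in. Once that monotonicity and charging scheme is nailed down, the counting and the running time are routine.
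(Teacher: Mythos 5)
Your plan is essentially the paper's own proof: it uses the same three ingredients (the scanline characterization of minimal permutation-graph separators from \citet{BKK95}, the bound of \cref{thm:dk_scan} on separators that are new at a swap, and the order-respecting reachability of \cref{thm:order} to define a monotone frontier), builds candidate solutions as unions of at most~$k$ scanline separators with each added scanline charged to the budget, and argues completeness exactly as the paper does, via a minimal separator decomposing into per-layer frontier scanlines that change only when the permutation changes. The one slip is in the counting: your $\bigO\bigl((d_\Sigma(2k+1))^k\bigr)$ candidate bound omits the choice of the \emph{initial} scanline separator in layer~1, of which there can be up to $(n-1)(2k+3)=\Theta(nk)$ and which are not covered by \cref{thm:dk_scan}; the paper branches over these at the first level, and this is precisely where the factor~$n$ in the stated running time comes from (you instead attribute the~$n$ to verification, which in fact needs only $\bigO(\abs{\tE})$ time per candidate). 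Fixing this changes the bookkeeping but not the claimed bound.
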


\begin{proof}
We present an algorithm that runs in~$\bigO((d_\Sigma (2k+1))^k  n\cdot\abs{\tE} + \tau  n^2)$ time,
which determines whether a given \TS{}-instance has a solution (see \cref{alg:swap}).

\newcommand{\seplist}{\texttt{seplist}}
\newcommand{\layerseps}{\texttt{seps}}
\begin{algorithm}[t]
\caption{}\label{alg:swap}
\begin{algorithmic}[1]
\Input{A \TS{}-instance~$\mathcal{I}=(\tG=([n],\tE,\tau),s,z,k)$,
	where~$\tG$ is a temporal permutation graph and~$s,z\in [n]$}
\Output{\true{} if~$\mathcal{I}$ is a yes-instance, \false{} otherwise}
\State compute $\pi_1, \dots, \pi_\tau$ \label{line:init_first}
\State let \seplist{} be an empty list
\State append the set of scanline separators of size at most~$k$
in~$G_1(\tG)$ to \seplist{}
\For{$i\in\{2,...,\tau\}$}
	\State let \layerseps{} be the set of scanline separators of size at most~$k$
	in~$G_i(\tG)$ that are not scanline separators in~$G_{i-1}(\tG)$
	\If{\layerseps{} is not empty}
		\State append \layerseps{} to \seplist{}
	\EndIf
\EndFor \label{line:init_last}
\State output \Call{GetSeparator}{$\emptyset$, 1}
\vspace{0.5\baselineskip}
\Function{GetSeparator}{$S$, $i$}
	\If{$S$ is a temporal ($s,z$)-separator of~$\tG$} \label{line:yes_con}
		\State \textbf{return} $\true$
	\EndIf
	\For{$j \in \{i, \dots, \tau\}$}
		\For{$S'\in \seplist[j]$} \label{line:loop}
			\If{$\abs{S}<\abs{S\cup S'}\leq k$} \label{line:condition}
				\If{\Call{GetSeparator}{$S\cup S'$, $j+1$}}
					\State \textbf{return} $\true$
				\EndIf
			\EndIf
		\EndFor
	\EndFor
	\State \textbf{return} $\false$
\EndFunction
\end{algorithmic}
\end{algorithm}

Remember that the total number of scanline separators of size
at most~$k$ in layer~1 is at most~$(n-1)(2k+3)$.
Furthermore, in all layers after
layer~1, the number of minimal (i.e., scanline) separators which are not shared
with the previous layer is at most~$d_\Sigma (2k+1)$ (see \cref{thm:dk_scan}).
Hence the first call of \textproc{GetSeparator} iterates at most
$(n-1)(2k+3)+d_\Sigma (2k+1) \in \bigO\left( d_\Sigma (2k+1) \cdot n \right)$
times and
every recursive call iterates at most~$d_\Sigma (2k+1)$ times.

Due to the condition in \cref{line:condition}, every time a recursive call is made,
the set passed to~$S$ contains at least one vertex more than before, but never exceeds the size~$k$.
Thus the maximum recursion depth is~$k$. In every call of \textproc{GetSeparator} it is checked in~$\bigO(\abs{E})$ time whether~$S$
is a temporal ($s,z$)-separator. This results in a running time
of
$\bigO\left( (d_\Sigma(2k+1))^k  n \cdot \abs{\tE}\right)$
for the initial call of \textproc{GetSeparator}.

It remains to show that Lines \ref{line:init_first} through \ref{line:init_last} can be performed in $\bigO(\tau n^2)$~time.
To construct some permutation $\pi_i$, we first iterate once through the edges of $G_i$ and build, for each vertex $v \in [n]$,
the set $I(v) := \{w \in [n] \mid w < v \text{ and } \{v, w\} \in E(G_i)\}$.
Then we can incrementally construct $\pi_i$ from an empty tuple by going through all vertices in ascending order and inserting each vertex~$v$ exactly to the left of all elements of~$I(v)$.
If $I(v)$~is implemented using a hash set, this takes $\bigO(n^2)$~time for each layer.
Afterwards, building $\seplist{}$ again takes $\bigO(n^2)$~time for each layer as there are $n^2$ potential scanlines
and each only requires constant checking time if they are all iterated in order.

\paragraph*{Correctness.}
It is easy to see that \cref{alg:swap} will never output \true{} when~$\mathcal{I}$ is a no-instance,
as the condition in \cref{line:yes_con} can only evaluate to true if there exists a temporal ($s,z$)-separator.

It remains to be shown that \cref{alg:swap} will always output \true{} when~$\mathcal{I}$ is a yes-instance.
Without loss of generality we assume that~$s< z$. We also assume that there exists
a minimal temporal ($s,z$)-separator~$S^*$ of size at most~$k$ in~$\tG$.
Due to \cref{thm:order}, every layer~$t$ in~$\tG-S^*$ has some farthest reachable vertex~$f_t$ between~$s$ and~$z$
such that until time~$t$, $s$~can reach all vertices~$v$ with $s\leq v\leq f_t$ via temporal paths
but no vertex $v$ with $v > f_t$.
Clearly $f_t \leq f_{t+1}$.
This means that in each layer~$t$, $S^*$ must contain a scanline separator that separates~$f_t$ from all vertices $v > f_t$.
We denote this scanline separator by $S^*_t$.
By minimality of~$S^*$,
$S^* = \bigcup_{t=1}^\tau S^*_t$.

Trivially, the following property holds for the initial call of \textproc{GetSeparator}:
\[
\bigcup_{t=1}^{i-1} S^*_{t} \subseteq S \subseteq S^* \tag{$*$}
\]

We next show that whenever $(*)$ holds for a call of \textproc{GetSeparator}, then either $S = S^*$ or $(*)$ also holds for some recursive call.
This implies that some recursive call will eventually produce $S^*$.

So assume now $(*)$ holds with $S \neq S^*$. Since $S$ is then not a temporal $(s,z)$-separator,
\textproc{GetSeparator}($S$, $i$) iterates through all scanline separators in~$\seplist[i]$ through $\seplist[\tau]$,
one of which must be the first scanline separator~$S^*_t$ which is not already contained in~$S$.
When it gets to that separator it makes a recursive call \textproc{GetSeparator}($S \cup S^*_t$, $i+1$).
This recursive call then again satisfies $(*)$.
\qed
\end{proof}
We leave open whether \TS{} is \fpt{} or becomes W[1]-hard when 
parameterized by either only the
separator size~$k$ or only the sum~$d_\Sigma$ of Kendall tau distances of permutations of consecutive layers.

\section{Conclusion}\label{chap:conclusion}

We showed that \TS{} remains NP-complete on temporal split graphs even when there are only
$\tau\geq 4$~layers,
but it becomes \fpt{} when parameterized by the lifetime~$\tau$ 
combined with the number~$p$ of ``switching vertices'', that
is, vertices that switch between the independent set and the clique.
We leave open, however, 
whether one can obtain fixed-parameter tractability when only
parameterizing by~$p$. 
Another natural restriction we 
can place on temporal split graphs is limiting the size
of the independent set for all layers.
We conjecture that \TS{} is \fpt{} with respect to the maximum size of the
independent set.

We also showed that \TS{} remains NP-complete on temporal permutation graphs, 
but becomes \fpt{} with
respect to the separator size~$k$ plus the sum~$d_\Sigma$ of Kendall tau distances of permutations of consecutive layers. 
We left open the complexity of \TS{} on temporal
permutation graphs when parameterized by the lifetime~$\tau$. 
Whether the problem stays \fpt{} or becomes W[1]-hard when 
parameterized by either only the
separator size~$k$ or only the sum~$d_\Sigma$ of Kendall tau distances of permutations of consecutive layers remains 
open as well.

Lastly, we leave for future research whether our results also hold in the
\emph{strict} case, that is, when the temporal paths that are to be destroyed by
the separator have strictly increasing time labels. Most of our algorithms
heavily rely on the fact that temporal paths may use several time edges with the
same label, and hence they can presumably not be adapted to the strict setting
in a straightforward way.

\bibliographystyle{abbrvnat}
\bibliography{strings-long,bibliography}


\end{document}